\newcommand{\tvv}{\boldsymbol{\theta}}
\newcommand{\E}{\mathbb{E}}
\newcommand{\PP}{\mathbb{P}}
\newcommand{\cov}{\mbox{\textrm{cov}}}
\newtheorem{Theorem}{Theorem}[section]
\newtheorem{Proposition}[Theorem]{Proposition}
\begin{document}


\title{Credit risk for large portfolios of green and brown loans:  extending the ASRF model.}

\author{A. Ramponi and S. Scarlatti \\ Department of Economics and Finance \\ University of Roma Tor Vergata}

\maketitle

\begin{abstract}
We propose a credit risk model for portfolios composed of green and brown loans, extending the ASRF framework via a two-factor copula structure. Systematic risk is modeled using potentially skewed distributions, allowing for asymmetric creditworthiness effects, while idiosyncratic risk remains Gaussian. Under a non-uniform exposure setting, we establish convergence in quadratic mean of the portfolio loss to a limit reflecting the distinct characteristics of the two loan segments. Numerical results confirm the theoretical findings and illustrate how value-at-risk is affected by portfolio granularity, default probabilities, factor loadings, and skewness. Our model accommodates differential sensitivity to systematic shocks and offers a tractable basis for further developments in credit risk modeling, including granularity adjustments, CDO pricing, and empirical analysis of green loan portfolios. 
\bigskip

\noindent \textbf{Key words}: Credit Risk Modeling; Large Homogeneous Portfolios (LHP) approximation; Green loans; Skewed Factor Models.

\end{abstract}


\section{Introduction}
The Vasicek loan portfolio loss model calculates the distribution of the total loss of a large portfolio of loans by aggregating the losses from individual loans. These individual loan losses depend on the borrower default probabilities, LGDs, and the correlation between the defaults. The first time the model was presented by O. Vasicek \cite{vas1}, it was considered as a breakthrough, and had a profound impact on subsequent developments of regulatory aspects of credit risk. Nowadays, almost forty years later, Vasicek intuition remains prominent, however many changes have occurred in the credit universe. One of the most notable examples is the increasing growth of the so-called 'green loans', that is, loans that are targeted to finance projects or initiatives that have positive environmental benefits or contribute to sustainability. Despite their importance for society, green loans can default exactly as the regular ones, often called "brown loans". For instance, the project for the construction of a large solar farm may experience unexpected technical issues or bureaucratic obstacles from local municipalities that strongly delay its proper functioning, making it difficult to meet the declared sustainability and energy targets and loan repayment obligations. All in all, the reason for a green loan to default can eventually be different than that of a brown one, but the conclusion stays the same. It is therefore a natural research question to ask if the Vasicek model can accommodate in a coherent way both types of loans at the same time. 
This question is not at all an academic one, by the fact that, in recent times, the loan portfolios of many financial institutions around the world have become of mixed type just in the sense we have described. The increased attention of society towards our environment and sustainable developments boosted the spread of green loans on credit markets. At the same time, their specific appeal was propelled by recent advancements in legal and regulatory frameworks aiming to clarify their definitions and usage. At this point, it is useful to explicitly report the LMA definition of these instruments:

\smallskip

\textit{Green loans are any type of loan instrument made available exclusively to finance or re-finance, in whole or in part, new and/or existing eligible Green Projects.}\footnote{Loan Market Association (LMA) Green Loan Principles.}

\smallskip

A more quantitative insight into their temporal diffusion in the markets is provided in Figure \ref{fig:loan_stats}, which plots the annual total number and value of green loans, based on data from \textit{Environmental Finance Data} \texttt{(efdata.org)}.
 
\begin{figure}[h]
\centering
\includegraphics[width=16cm,height=8cm]{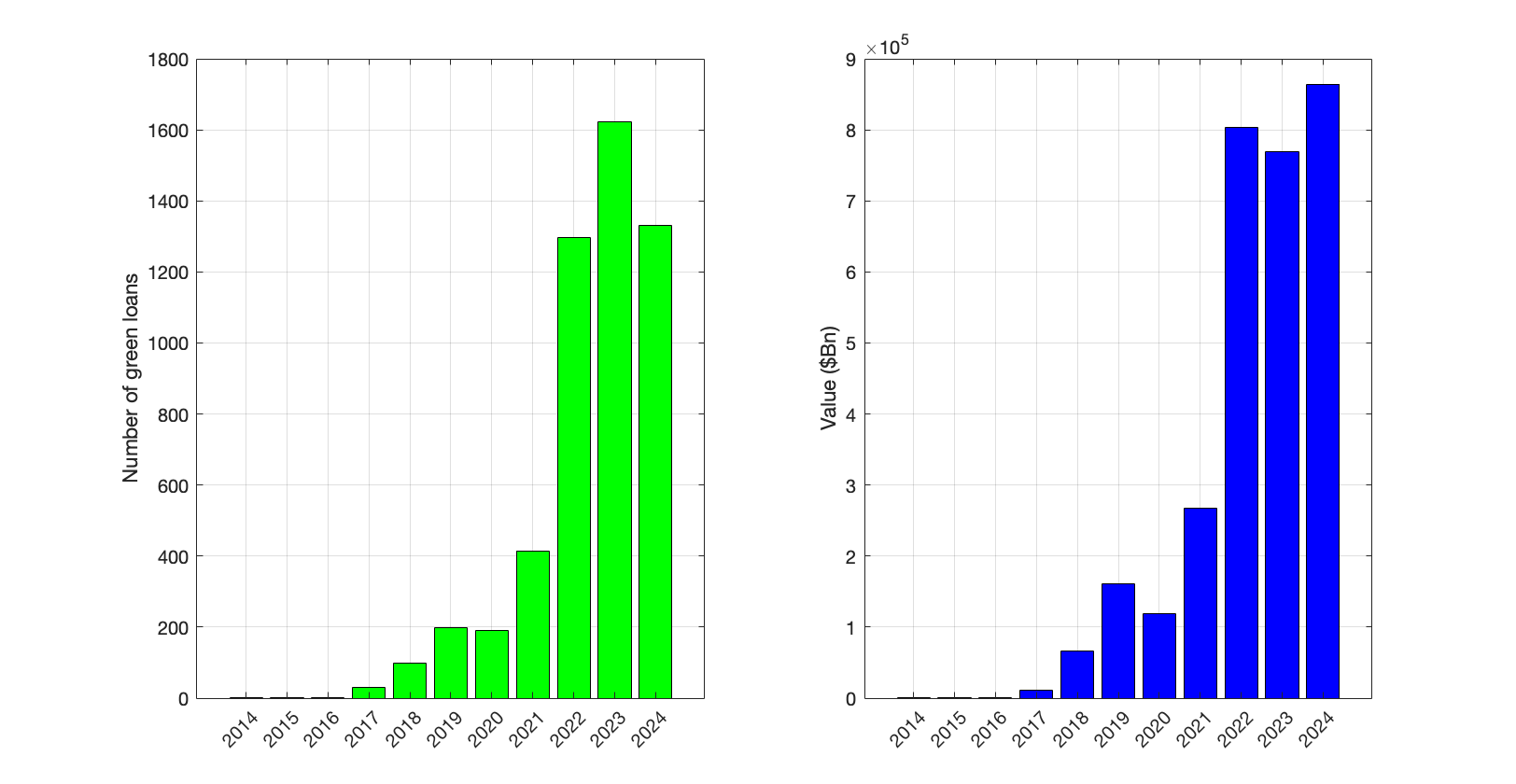}
\caption{Annual number (left plot) and value (right plot) in dollar billions of green loans (source: Environmental Finance Data, \texttt{efdata.org}).}
\label{fig:loan_stats}
\end{figure} 

We wish to remark that our paper is not the first to explore how the consideration of green loans in a large loan portfolio introduces modification in the Vasicek approach and related distributional results. Indeed, Agliardi \cite{agl21} already considered this possibility. More specifically, in that paper the securitization process is discussed creating CDO tranches that have a green-brown loan portfolio as collateral. The author assumes that the CDO senior tranche is sold in the bond market and the proceeds used to finance new green projects, naming the whole procedure "green securitization". Another direction of research close to the one presented here is  considered in Lee and Poon \cite{lp15}. There, the authors introduce a useful generalization of Vasicek model based on the consideration of a systematic factor affecting all borrower returns which is skew-normal distributed in the sense of Azzalini \cite{azz1}. This introduces a further parameter to the distribution of borrower returns which adds wider flexibility to the Vasicek approach still preserving its tractability. Lee and Poon test their model on the loan portfolios of a sample of $15$ different Korean banks coming to the conclusion that, in most of the cases, their model produces a better fit of the observed data w.r.t. the canonical gaussian model. In the present paper we  build a coherent unified framework for the results of Agliardi \cite{agl21} and of Lee and Poon \cite{lp15} and show that it can be quite effective in analyzing the credit risk profiles of large portfolios. 
Indeed, modeling the default structure of credit portfolios is one of the major challenges in such a framework, handling a model which skips unnecessary complexities, however, catches most relevant aspects. The \textit{Large Homogeneous Portfolios} (LHP) approach (see \cite{gordy}), which encompasses that of Vasicek,  has these characteristics. 
It is a risk assessment methodology that is used to evaluate the credit risk of large portfolios composed of similar loans. In the limit of an infinitely granular homogeneous portfolio, the total portfolio loss becomes a deterministic function when conditioned on a single systematic risk factor, which is the fundamental insight underpinning the Vasicek model. As a consequence,  Gordy \cite{gordy} showed that the value-at-risk of a portfolio could be well approximated by the quantile of an explicit function of the conditional expectation of losses. This result forms the theoretical basis for the Internal Ratings-Based (IRB) approach to credit risk capital under the Basel II Accord, and continues to influence regulatory capital modeling in Basel III and beyond.

Since then, a rich body of literature has extended the ASRF framework in multiple directions. Emmer and Tasche \cite{EmmTasc} and Martin and Wilde \cite{MartWild} explored granularity adjustments to account for real portfolios having a finite number of loans, highlighting deviations from the asymptotic approximation. Pykhtin \cite{pykhtin2004} and Frey and McNeil \cite{frey2003} introduced multi-factor extensions and copula-based approaches to capture sectoral or regional dependencies, while maintaining computational efficiency. The implications of using the ASRF framework from a regulatory perspective are considered, e.g., by Casellina et al. (see \cite{clu23} and the references therein), who investigated the impact of the statistical estimation error of the (long-run) probability of default (PD) of a borrower, proposing a practical correction method. 
To address non-Gaussian features in empirical credit loss distributions, Schloegl and O’Kane \cite{so05} proposed the use of a Student-t copula, and provide analytical formulae for the density and the cdf of the portfolio loss distribution, and they compared the value-at-risk implied by such a distributional assumption. Similarly, as previously mentioned, Lee and Poon \cite{lp15} introduced skewed distributions for the systematic factor, capturing asymmetries in borrower creditworthiness that are often observed in stressed market environments (see also \cite{ly19} for a recent application). This direction aligns with empirical evidence showing that default clustering and tail risk are often underrepresented in standard Gaussian models. More recently, Tang et al. \cite{TTY2021} developed a dynamic framework to model the underlying drivers of default risk and established a large-portfolio loss limit, the characteristics of which were analyzed through Monte Carlo simulation.

Recent studies have also incorporated environmental risk dimensions into credit models. For example, in addition to the work already mentioned by Agliardi \cite{agl21}, Battiston et al. \cite{battiston2017} discussed the implications of climate-related risks on financial stability, suggesting the need to adapt traditional credit models to incorporate transition and physical risk channels. Javadi and Masum \cite{jm21} found empirical evidence that firms situated in regions more exposed to climate change tend to incur significantly higher interest rate spreads on their bank loans. Similarly, Huang et al. \cite{hkww22} showed that elevated climate risk is associated with less favorable bank loan terms. Furthermore, higher levels of climate risk correlate with weaker financial performance and increased default probability, potentially resulting in more stringent lending conditions.


In the present study, we contribute to this strand of literature by explicitly considering a credit portfolio partitioned into two subportfolios: one composed of loans classified as "green" and the other including all remaining loans, hereafter referred to as "brown" loans. Extending the classical Vasicek framework, we employ a factor copula approach to capture default dependencies among borrowers. The dependence structure is governed by two distinct systematic risk factors, while idiosyncratic risk components are modeled, as in the standard setting, using independent standard Gaussian random variables. By introducing additional distributional assumptions on the systematic factors, we allow for skew-normal specifications. This generalization introduces a skewness parameter, thereby extending the classical Gaussian framework and enabling the model to capture potential asymmetries in borrower (log-)creditworthiness.
Moreover, our model incorporates a non-uniform distribution of relative exposures across obligors. Under mild regularity conditions, we establish a quadratic mean convergence result for the partitioned portfolio loss, showing that it converges to a limiting random variable that reflects the distinct contributions of green and brown loan segments. The convergence condition can be interpreted in terms of the infinitesimal Herfindahl-Hirschman concentration index, linking portfolio granularity with asymptotic behavior of the index (see Gordy \cite{gordy} and Emmer and Tasche \cite{EmmTasc}). 
The limiting loss variable is fully characterized by the default probabilities and factor loadings associated with the two subportfolios, thereby allowing for differentiated default behavior and sensitivity to systematic risk across green and brown loan segments. Furthermore, we focus on the study of the convergence properties of the portfolio loss distribution, emphasizing the effects of exposure concentration and model parameters, such as default probabilities, factor loadings, and skewness, on value-at-risk.

The remainder of the paper is structured as follows. Section 2 introduces the modeling framework and presents the main theoretical results, with particular attention to the convergence assumptions, which are supported by empirical observations. Section 3 presents a set of numerical experiments aimed at assessing the behavior of losses in green-brown portfolios under the proposed credit risk model. Finally, Section 4 concludes with a summary of the key findings and directions for future research.


\section{Mixed green-brown loan portfolios and their LHP approximation}
The approximation introduced by Vasicek (\cite{vas1},\cite{vas2}) for computing losses in a credit portfolio of $N$ loans due to defaults of a fraction of the borrowers is based on the following set of assumptions:\\

\noindent
 {\bf{(V1)}} the number of borrowers is very large;\\
 {\bf{(V2)}} all borrowers share the same default probability;\\
 {\bf{(V3)}} logarithmic returns of borrowers are described by  the the following one-factor model
\begin{equation}\label{basic}
Y^{(N)}_h=\rho X+\sqrt{1-\rho^2}Z^{(N)}_h\;\;h=1,\ldots,N, \text{and}\;\;\rho\in[0,1);
\end{equation}
 {\bf{(V4)}} the systematic factor $X$ is $N(0,1)$ distributed, the  idiosyncratic shocks $(Z^{(N)}_h)$ are i.i.d  $N(0,1)$ r.v.'s indipendent from $X$.\\ 
 
\noindent 
The last two assumptions imply that the chosen copula model for logarithmic returns is
\begin{equation}\label{gaussian1}
{\bf{Y}}^{(N)}\sim N({\bf{0}},{\bf{C}})
\end{equation}
with $N\times N$ covariance matrix ${\bf{C}}=(C_{ij})$ such that $C_{ii}=1$ and $C_{ij}=\rho^2$ for $i\neq j$, that is the correlation of borrower returns does not change across borrower pairs.\\
 Because of these assumptions the procedure is known as Large Homogeneous Portfolio (LHP) single factor approximation. Vasicek hinged his method on the adoption of the structural approach to default as proposed by Merton in \cite{mert}. According to Merton  default of the borrower labeled by "{\it{i}}\;" can only happen if its return falls below a certain threshold value $K_i$ at the repaying time $T$. Therefore assuming there are $N$ borrowers from a specific bank, ${\bf{u}}^{(N)}=(u_1,\ldots,u_N)$ being their exposure percentages\footnote{By denoting with $E_i$ the absolute exposure to obligor $i$, the exposure percentage is given by $u_i = E_i/\sum_{i}^N E_i$. We assume that these exposures are known and deterministic. In practice, the situation can be more complex. In fact, much of bank lending takes the form of credit lines, which grant the borrower a degree of control over the actual exposure amount. However, under certain conditions (see Gordy \cite{gordy}), the quantity $E_i$ can be interpreted as the expected exposure in the event of the obligor’s default.}, and denoting by ${\bf Y}^{(N)}=(Y_1,\ldots,Y_N)$ the vector of their returns at time $T$ and by $(R_1,\ldots,R_N)$ the recovery rates, the percentage of loss on the bank credit portfolio will be given by:
$$
L({\bf Y}^{(N)},{\bf{u}}^{(N)})=\sum_{i=1}^Nu_i(1-R_i) 1_{(Y_i\leq K_i)}.
$$
Since the present paper, as in \cite{vas1} and \cite{vas2}, is not focusing on recovering aspects but on distributional properties of the loss, i.e. its quantiles, we shall assume $R_1=\ldots=R_n=0$ and  moreover adopt the shorthand notation $L^N({\bf{u}}):=L({\bf{Y}}^{(N)},{\bf{u}}^{(N)})$, hence
\begin{equation}\label{loss}
L^N({\bf{u}}) = \sum_{i=1}^N u_i \mathbb{I}_{(Y_i\leq K_i)} = \sum_{i=1}^N u_i \mathbb{I}_{D_i}.
\end{equation}
Under the previous set of assumptions (V1)...(V4) with the further condition $u_i=\frac1{N}$ for $i=1\ldots,N$, that is, the exposure weights are all equal, Vasicek derived the distribution of portfolio losses and computed its quantiles simply substituting in the obtained formula the estimates for the default probability and asset correlation.\\

We shall now extend the Vasicek model in a double direction: we shall considered loans of two different types, and moreover, we shall adopt a two-factor model to explain borrower logarithmic returns.

From now on, we suppose that the bank has classified the loans in the portfolio into two different classes, the class of "brown loans" and that of "green loans", and let $N_b$ and $N_g$ be the number of loans in each of these two classes, so that $N=N_b+N_g$. Correspondingly, the sum (\ref{loss}) splits  into two distinct contributions given by brown and green loans :
\begin{equation} \label{g_b_Loss}
L^N({\bf{u}})  = \sum_{i=1}^{N_b} u^b_i \mathbb{I}_{D^b_i}+\sum_{j=1}^{N_g} u^g_j \mathbb{I}_{D^g_j}\;,
\end{equation}
where the suffixes "b" and "g" have been added to exposures and to defaulting events. From now on, we suppose the exposure percentages have been reordered in such a way that the vector ${\bf{u}}^{(N)}$ is partitioned as ${\bf{u}}^{(N)}=({\bf{u}}^b,{\bf{u}}^g)\in \mathbb{R}^{N_b}\times\mathbb{R}^{N_g}$. Similarly, partitioning ${\bf{1}}^{(N)}
=(1,\ldots,1)\in \mathbb{R}^N$  in the two components ${\bf{1}}=({\bf{1}}^b,{\bf{1}}^g)$, we have that
$\omega_b={\bf{1}}^b\cdot {\bf{u}}^b$ and $\omega_g={\bf{1}}^g\cdot { \bf{u}}^g$, respectively, represent the percentages of total exposure of the bank to brown loans and to green loans, therefore $\omega_b+\omega_g=1$. For example, $\omega_b\approx 0.75$ and $\omega_g\approx 0.25$ appear to be, on average, the consistent values for the credit portfolios of a bank in the European bank sector (\cite{FC2023}). We now make a set of assumptions inspired by the Vasicek approach recalled at the beginning of the section and similar to those appearing in the paper \cite{agl21}, that is:\\

\noindent
 {\bf{(bg1)}} the numbers of brown and green borrowers are very large;\\
 {\bf{(bg2)}} all borrowers in the same class share the same default probability;\\
 {\bf{(bg3)}} logarithmic returns of borrowers within each class are described by  the following two-factor model
\begin{equation}\label{basic}
Y^a_h=\rho_a\sqrt{1-\delta_a^2}\;X_1 +\rho_a\delta_a\; X_2+\sqrt{1-\rho_a^2}\;Z^a_{h}\;\;h=1,\ldots,N_a,\;a=b,g,
\end{equation}
with $\rho_a \in[0,1)$ and $\delta_a \in (-1,1)$;\\
 {\bf{(bg4)}} the systematic factors are independent and distributed accordingly to  $X_1\sim N(0,1)$, $X_2\sim |W|$ with $W\sim N(0,1)$, the  idiosyncratic shocks $(Z^a_h)$ are i.i.d  $N(0,1)$ r.v.'s independent from $X_1$ and $X_2$.\\ 
 
 \noindent
The last two assumptions imply that the chosen copula model for logarithmic returns is
\begin{equation}\label{skew-gaussian}
{\bf{Y}}^{(N)}={\bf{Y}}^{(N_b,N_g)}\sim SN({\bf{0}},{\bf{b}}{\bf{b}}'+{\bf{D}}{\bf{\Sigma}}, \frac{{\bf{\Sigma}}^{-1}{\bf{b}}}{\sqrt{1-{\bf{b}}'{\bf{\Sigma}}^{-1}{\bf{b}}}})\footnote{the symbol $SN(\xi,\Omega,\alpha)$ denotes the multivariate skew-normal distribution of location vector $\xi$, covariance matrix $\Omega$ and shape vector $\alpha$, see \cite{azz1}.},
\end{equation}
with ${\bf{b}}=(\beta_b{\bf{1}}^b,\beta_g{\bf{1}}^g)'$, where $\beta_a\equiv \delta_a\rho_a$ for $a=b,g$, ${\bf{D}}=diag(\sqrt{1-\beta^2_b}{\bf{1}}^b,\sqrt{1-\beta^2_g}{\bf{1}}^g)$, and ${\bf{\Sigma}}$ is a partitioned covariance matrix which entries are specified in Appendix.1, containing a proof of (\ref{skew-gaussian}). Moreover, the marginal $Y^a_h$ is (univariate) skew-normal $SN(\alpha_a)$, being $\alpha_a = \rho_a \delta_a /(\sqrt{1-\rho_a^2 \delta_a^2}$, hence
\begin{equation} \label{snorm_cdf}
F_{Y^a_h}(x) = 2 \int_{-\infty}^x N(\alpha_a y) dN(y).
\end{equation}

\noindent
{\bf{Remark}}: From (\ref{skew-gaussian}) we see that if $\delta_b=\delta_g=0$ then  ${\bf{b}}={\bf{0}}$, moreover  ${\bf{D}}=diag ({\bf{1}}^b,{\bf{1}}^g)={\bf{I}}$ and  the law of ${\bf{Y}}^{(N)}$ reduces to $SN({\bf{0}},{\bf{\Sigma}},{\bf{0}})=N(\bf{0},{\bf{\Sigma}})$. Therefore, resulting also ${\bf{\Sigma}}={\bf{C}}$, this reproduces (\ref{gaussian1}). \\

In the present model, the probability of default of a borrower in class "$a$" is given by $p^{(a)}\equiv \mathbb{P}(Y^{a} \leq K^{(a)})$. Given a statistical estimation $\hat p^{(a)}$ of the class defaulting probability and inverting the previous relation, it produces an estimation $\hat K^{(a)}$ of the class defaulting level by means of the relation $\hat K^{(a)} = F^{-1}_{Y^{a}}(\hat p^{(a)})$. \\

Next we introduce  the "per-class" Herfindahl-Hirschman Indexes 
$$
HHI_{N_b}({\bf{u}}^b)=\sum^{N_b}_{i=1} (u_i^b)^2,\;\;\;HHI_{N_g}({\bf{u}}^g)=\sum^{N_g}_{j=1} (u_j^g)^2. 
$$
If one of the two indexes index goes to zero as the number of the borrowers in that class increases this means that there is no loan which is significantly larger than all the others in the considered class. The following results hold:\\

\begin{Proposition}\label{Convergence}
If $\E[\cov(\mathbb{I}_{D^a_i},\mathbb{I}_{D^c_j}|X_1, X_2)]=0$, for $i \neq j$, $a,c \in \{b,g \}$ and $HHI_{N_b}({\bf{u}}^b), HHI_{N_b}({\bf{u}}^g) \rightarrow 0$ as $N_g, N_b \to \infty$, then
$$
L^N(\mathbf{u}) - \E[L^N(\mathbf{u}) | X_1, X_2 ]\to 0
$$
in probability.
\end{Proposition}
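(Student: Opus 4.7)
My plan is to prove the stronger statement of convergence in quadratic mean, from which convergence in probability follows by Chebyshev's inequality. Since $L^N(\mathbf{u}) - \E[L^N(\mathbf{u})|X_1,X_2]$ has mean zero by the tower property, the task reduces to bounding its variance, which, by the law of total variance, equals $\E[\mathrm{Var}(L^N(\mathbf{u})|X_1,X_2)]$.

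\noindent
\textbf{Main steps.} First I would expand $\mathrm{Var}(L^N(\mathbf{u})|X_1,X_2)$ using the partitioned decomposition (\ref{g_b_Loss}), obtaining a double sum over all pairs of obligors, indexed both within and across the two classes:
\begin{equation*}
\mathrm{Var}(L^N(\mathbf{u})|X_1,X_2) = \sum_{a,c\in\{b,g\}}\sum_{i=1}^{N_a}\sum_{j=1}^{N_c} u_i^a u_j^c \,\mathrm{cov}(\mathbb{I}_{D_i^a},\mathbb{I}_{D_j^c}|X_1,X_2).
\end{equation*}
Taking expectations term by term and applying the hypothesis $\E[\mathrm{cov}(\mathbb{I}_{D_i^a},\mathbb{I}_{D_j^c}|X_1,X_2)]=0$ for $i\neq j$ (and for any pair of distinct obligors across classes), every off-diagonal contribution vanishes. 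Only the diagonal terms with $a=c$ and $i=j$ survive, giving
\begin{equation*}
\E[\mathrm{Var}(L^N(\mathbf{u})|X_1,X_2)] = \sum_{a\in\{b,g\}}\sum_{i=1}^{N_a} (u_i^a)^2 \,\E[\mathrm{Var}(\mathbb{I}_{D_i^a}|X_1,X_2)].
\end{equation*}

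\noindent
Next I would use the fact that each $\mathbb{I}_{D_i^a}$ is a Bernoulli random variable, so its conditional variance is bounded by $1/4$ almost surely. This yields
\begin{equation*}
\E[\mathrm{Var}(L^N(\mathbf{u})|X_1,X_2)] \leq \frac{1}{4}\bigl(HHI_{N_b}(\mathbf{u}^b)+HHI_{N_g}(\mathbf{u}^g)\bigr).
\end{equation*}
The assumption that both Herfindahl-Hirschman indices tend to zero then forces the right-hand side to zero, so $L^N(\mathbf{u}) - \E[L^N(\mathbf{u})|X_1,X_2] \to 0$ in $L^2$, and therefore in probability.

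\noindent
\textbf{Main obstacle.} The delicate point is not the calculation itself but ensuring that the hypothesis as stated really annihilates every off-diagonal term, including the cross-class pairs where $a\neq c$. This is essentially a matter of careful bookkeeping: brown and green borrowers correspond to disjoint populations, so any pair $(i,a)\neq (j,c)$ represents two distinct obligors, and the assumption applies. Once this is made explicit, the remaining argument is a direct Bernoulli-variance bound combined with the granularity assumption.
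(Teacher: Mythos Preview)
Your proposal is correct and follows essentially the same route as the paper: both arguments show $L^2$-convergence by expanding $\E[(L^N-\E[L^N|X_1,X_2])^2]$, killing the off-diagonal terms via the conditional-covariance hypothesis, and bounding the surviving diagonal terms by $\tfrac14\sum_i u_i^2 = \tfrac14(HHI_{N_b}+HHI_{N_g})$. The only cosmetic difference is that the paper first re-indexes the loss as a single sum over $i=1,\dots,N$ before expanding, whereas you keep the two-class indexing throughout; your remark about cross-class pairs being distinct obligors is exactly the bookkeeping that makes this equivalent.
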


\begin{proof} We rewrite the loss (\ref{g_b_Loss}) as $L^N(\mathbf{u}) = \sum_{i=1}^N u_i \mathbb{I}_{D_i}$, where $u_i = u_i^b$ for $i=1, \ldots, N_b$, and $u_i = u_{i-N_b}^g$ for $i=N_b+1, \ldots, N_b+N_g$ (similarly for the default events $D_i$). We have
$$
\E[(L^{N}-\E(L^N|X_1, X_2))^2]=\sum_{i=1}^N u_i^2 \E[(\mathbb{I}_{D_i}-\E[\mathbb{I}_{D_i}|X_1, X_2 ])^2]
$$
$$
+2 \sum_{i<j}^N u_i u_j \E[(\mathbb{I}_{D_i}-\E[\mathbb{I}_{D_i}|X_1, X_2 ]) (\mathbb{I}_{D_j} - \E[\mathbb{I}_{D_j}|X_1, X_2 ])]
$$
$$
=\sum_{i=1}^N u_i^2\E(var(\mathbb{I}_{D_i}|X_1, X_2 )) + 2 \sum_{i<j}^N u_i u_j  \E[\cov(\mathbb{I}_{D_i},\mathbb{I}_{D_j}|X_1, X_2)].
$$

Since $var(\mathbb{I}_{D_i}|X_1, X_2 ) = \E[\mathbb{I}_{D_i}|X_1, X_2 ] (1-\E[\mathbb{I}_{D_i}|X_1, X_2 ]) \leq 1/4$, from the hypothesis we
immediately get
$$
\E[(L^{N}-\E(L^N|X_1, X_2 ))^2] \leq \frac{1}{4}\sum^N_{i=1} u_i^2 = \frac{1}{4} \left(\sum^{N_b}_{i=1} (u_i^b)^2 + \sum^{N_b}_{i=1} (u_i^g)^2 \right) \to 0.
$$
\end{proof}

\begin{Proposition}\label{Limit-distrib}
 {\bf{(i)}} Suppose $u^b_i = \omega_b/N_b$  and $u^g_j = \omega_g/N_g$, for $i=1,\ldots,N_b$ and $j=1,\ldots,N_g$ , then for $N_b$ and $N_g $ sufficiently large
$$
P(L^N({\bf{u}})\leq \ell) \approx P(L^{mix}_{\omega_b,\omega_g}\leq \ell),\; \forall \ell\in (0,1),
$$
with $L^{mix}_{\omega_b,\omega_g}$ given by the weighted sum
\begin{equation} \label{asymp1}
L^{mix}_{\omega_b,\omega_g}=\omega_bL_b(X^b)+\omega_gL_g(X^g),
\end{equation}
where $L_b(\cdot),L_g(\cdot)$ are explicit functions from $\mathbb{R}$ to $[0,1]$ depending on the parameters of the class and $X^b$ , $X^g$ are skew-normal distributed.\\
 {\bf{(ii)}} If $\delta_b=\delta_g\equiv \delta$ then $X^b\sim X^g$ and we have 
\begin{equation} \label{asymp2}
 L^{mix}_{\omega_b,\omega_g}=v_{\omega_b,\omega_g}(X)
\end{equation}
 with $v_{\omega_b,\omega_g}(\cdot)\equiv \omega_bL_b(\cdot)+\omega_gL_g(\cdot)$ and $X\sim SN(0,1,\frac{\delta}{\sqrt{1-\delta^2}})$. Moreover, the  density of $ L^{mix}_{\omega_b,\omega_g}$ admits the semi-analytical representation given by
\begin{equation}\label{dens}
f^{mix}_{\omega_b,\omega_g}(\ell) =\frac{2 N(\gamma x^*(\ell)) N'(x^*(\ell))}{ \sum_{a = b,g}\left(\frac{\omega_a\rho_a}{\sqrt{1-\rho_a^2}}\right)N'\left(\frac{F_{Y^{(a)}}^{-1}(p^{(a)})-\rho_a x^*(\ell)}{\sqrt{1-\rho_a^2}} \right) },
\end{equation}
over the unit interval, with $\gamma \equiv \frac{\delta}{\sqrt{1-\delta^2}}$, where $x^*(\ell)$ is, for each fixed $\ell \in (0,1)$, the unique root of the equation:
$$
\omega_b\!\underbrace{N\left(\frac{F_{Y^{(b)}}^{-1}(p^{(b)}) - \rho_b x} {\sqrt{1-\rho_b^2}}\right)}_{decreasing \;in\; x}\!\!+\omega_g\!\underbrace{N\left(\frac{F_{Y^{(g)}}^{-1}(p^{(g)}) - \rho_g x} {\sqrt{1-\rho_g^2}}\right)}_{decreasing \;in\; x}\!=\ell.
$$
\end{Proposition}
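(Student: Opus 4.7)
The plan is to combine Proposition \ref{Convergence} with an explicit computation of the conditional expectation $\E[L^N(\mathbf{u})\mid X_1, X_2]$ under the uniform-exposure assumption. First, I would note that conditional on $(X_1,X_2)$ the defaults $\{\mathbb{I}_{D_h^a}\}$ are independent (as the $Z_h^a$ are i.i.d.\ and independent of the systematic factors), so the covariance hypothesis of Proposition \ref{Convergence} is automatically satisfied; the choice $u_i^b=\omega_b/N_b$, $u_j^g=\omega_g/N_g$ also makes the Herfindahl index of each class equal to $\omega_a^2/N_a\to 0$. From the factor representation (\ref{basic}) one gets, for every $h$ in class $a$,
\begin{equation*}
\PP(Y_h^a \leq K^{(a)} \mid X_1, X_2) = N\!\left(\frac{K^{(a)} - \rho_a\sqrt{1-\delta_a^2}\,X_1 - \rho_a\delta_a\, X_2}{\sqrt{1-\rho_a^2}}\right),
\end{equation*}
which is independent of $h$, so that
\begin{equation*}
\E[L^N(\mathbf{u})\mid X_1, X_2] = \omega_b L_b(X^b) + \omega_g L_g(X^g),
\end{equation*}
with $X^a := \sqrt{1-\delta_a^2}\,X_1 + \delta_a X_2$, $L_a(x) := N\!\left((K^{(a)}-\rho_a x)/\sqrt{1-\rho_a^2}\right)$ and $K^{(a)}=F^{-1}_{Y^a}(p^{(a)})$. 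Since $X_1\sim N(0,1)$, $X_2\sim |W|$ with $W\sim N(0,1)$ independent, the classical stochastic representation of the univariate skew-normal (Azzalini \cite{azz1}) yields $X^a\sim SN(0,1,\alpha_a)$ with $\alpha_a=\delta_a/\sqrt{1-\delta_a^2}$. Hence $\E[L^N(\mathbf{u})\mid X_1, X_2]$ coincides in law with the random variable $L^{mix}_{\omega_b,\omega_g}$ of (\ref{asymp1}), and Proposition \ref{Convergence} gives $L^N(\mathbf{u}) - L^{mix}_{\omega_b,\omega_g}\to 0$ in probability, hence convergence in distribution, establishing (i).

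For part (ii), when $\delta_b=\delta_g\equiv\delta$ the two variables $X^b$ and $X^g$ coincide $P$-a.s.\ with $X:=\sqrt{1-\delta^2}\,X_1+\delta X_2\sim SN(0,1,\gamma)$, $\gamma=\delta/\sqrt{1-\delta^2}$, so that $L^{mix}_{\omega_b,\omega_g}=v_{\omega_b,\omega_g}(X)$. Each summand $L_a$ is strictly decreasing in $x$ (composition of the strictly increasing c.d.f.\ $N$ with a strictly decreasing affine function), therefore $v_{\omega_b,\omega_g}$ is strictly decreasing and maps $\mathbb{R}$ bijectively onto $(0,1)$; for every $\ell\in(0,1)$ the equation $v_{\omega_b,\omega_g}(x)=\ell$ thus admits the unique root $x^*(\ell)$ claimed in the statement.

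The density (\ref{dens}) then follows from the one-dimensional change-of-variables formula $f^{mix}_{\omega_b,\omega_g}(\ell)=f_X(x^*(\ell))/|v'_{\omega_b,\omega_g}(x^*(\ell))|$. Using the skew-normal density $f_X(x)=2N(\gamma x)N'(x)$ and differentiating $v_{\omega_b,\omega_g}$ term by term yields
\begin{equation*}
v'_{\omega_b,\omega_g}(x) = -\sum_{a=b,g}\omega_a\,\frac{\rho_a}{\sqrt{1-\rho_a^2}}\,N'\!\left(\frac{F_{Y^a}^{-1}(p^{(a)})-\rho_a x}{\sqrt{1-\rho_a^2}}\right),
\end{equation*}
whose absolute value is precisely the denominator in (\ref{dens}). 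The only substantive step is the identification of $X^a$ as a skew-normal, which rests on the classical representation of $SN(0,1,\alpha)$ as $\sqrt{1-\delta^2}\,U+\delta|V|$ for independent standard normals $U,V$ with $\delta=\alpha/\sqrt{1+\alpha^2}$; once this is in hand, the conditional-independence computation, the monotonicity argument and the change of variables are all routine.
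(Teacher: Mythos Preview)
Your proposal is correct and follows essentially the same route as the paper: compute $\E[L^N(\mathbf{u})\mid X_1,X_2]$ explicitly via the factor representation, identify $X^a=\sqrt{1-\delta_a^2}\,X_1+\delta_a X_2$ as $SN(0,1,\delta_a/\sqrt{1-\delta_a^2})$, invoke Proposition~\ref{Convergence}, and for (ii) use strict monotonicity of $v_{\omega_b,\omega_g}$ together with the change-of-variables formula and the skew-normal density $2N(\gamma x)N'(x)$. One small wording point: you say the conditional expectation ``coincides in law'' with $L^{mix}_{\omega_b,\omega_g}$, but your own computation actually shows equality as random variables (which is what is needed to deduce $L^N-L^{mix}\to 0$ in probability from Proposition~\ref{Convergence}); otherwise the argument matches the paper's.
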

\begin{proof}
 {\bf{(i):}}: By the previous result is sufficient to show that $L^{mix}_{\omega_b,\omega_g}=\E[L^N(\mathbf{(u^b,u^g)}) | X_1, X_2 ]$ for all $N$. We have
$$
\E[L^N(\mathbf{(u^b,u^g)}) | X_1, X_2 ]=\sum_{i=1}^{N_b}u^b_i\E(\mathbb{I}_{D^b_i}|X_1,X_2)+\sum_{j=1}^{N_g}u^g_j\E(\mathbb{I}_{D^g_i}|X_1,X_2)
$$
$$
=\frac{\omega_b}{N_b}\sum_{i=1}^{N_b}P(D^b_i|X_1,X_2)+\frac{\omega_g}{N_g}\sum_{j=1}^{N_g}P({D^g_j}|X_1,X_2)
$$
$$
=\frac{\omega_b}{N_b}\sum_{i=1}^{N_b}P(Y_i^b\leq K^b|X_1,X_2)+\frac{\omega_g}{N_g}\sum_{j=1}^{N_g}P(Y_j^g\leq K^g|X_1,X_2)
$$
$$
={\omega_b}P(Y_1^b\leq K^b|X_1,X_2)+{\omega_g}P(Y_1^g\leq K^g|X_1,X_2),
$$
where the last equality follows from intra-class homogeneity. Since $Z_1^b$ and $Z_1^g$ are independent from $X_1$ and $X_2$ it follows from ${\bf{(bg3)}}$ that
$$
P(Y_1^a\leq K^a|X_1,X_2)=P\biggl(Z_1^a\leq \frac{K^a-\rho_a(\sqrt{1-\delta^2_a}X_1+\delta_aX_2)}{\sqrt{1-\rho^2_a}}\biggr)=N(h^a(X^a)),
$$
for $a\in \{b,g\}$ and where $N(\cdot)$ denotes the distribution of a standard normal , $h^a(x)\equiv \frac{K^a-\rho_a x}{\sqrt{1-\rho^2_a}}$ and $X^a \equiv \sqrt{1-\delta^2_a} X_1+\delta_a X_2 \sim SN(0,1,\frac{\delta_a}{\sqrt{1-\delta_a^2}})$. Henceforth we have
$$
\E[L^N(\mathbf{(u^b,u^g)}) | X_1, X_2 ]={\omega_b}N(h^b(X^b))+{\omega_g}N(h^g(X^g)),
$$
and the proof ends by setting $L_b\equiv N\circ h^b$ and $L_g\equiv N\circ h^g$; \\
 {\bf{(ii):}} Consider the distribution of the r.v. $v_{\omega_b,\omega_g}(X)$, that is 
 $$
 F_v(\ell)=P(v_{\omega_b,\omega_g}(X)\leq \ell)
 $$
 and notice that the function $x\to v_{\omega_b,\omega_g}(x)$ from  $\mathbb{R}$ to $[0,1]$ is strictly decreasing, therefore for each $\ell\in[0,1]$ the equation $v_{\omega_b,\omega_g}(x)=\ell$ admits a unique solution $x^*(\ell)\equiv v_{\omega_b,\omega_g}^{-1}(\ell)$, henceforth $F_v(\ell)=P(X\geq x^*(\ell))=1-P(X\leq x^*(\ell))$. By differentiating this last expression we obtain 
 $$
 \frac{d F_v(\ell)}{d\ell}=-f_X(x^*(\ell))\bigg(\frac{dv_{\omega_b,\omega_g}^{-1}(\ell)}{d\ell}\bigg)=\frac{f_X(x^*(\ell))}{\sum_{a = b,g}\left(\frac{\omega_a\rho_a}{\sqrt{1-\rho_a^2}}\right)N'\left(\frac{F_{Y^{(a)}}^{-1}(p^{(a)})-\rho_a x^*(\ell)}{\sqrt{1-\rho_a^2}} \right) }\;,
 $$
where $f_X$ denotes the density of $X$. Since this density has the form of the numerator of formula (\ref{dens}) the proof is finished.
\end{proof}

{\Remark \label{rem1} Under the hypothesis of Proposition \ref{Limit-distrib}, \textbf{(ii)}, by denoting with $VaR_{\beta}(L^{mix}_{\omega_b,\omega_g})$ the value-at-risk at level $\beta$ of the r.v. $L^{mix}_{\omega_b,\omega_g}$, we immediately have 
$$
P(v_{\omega_b,\omega_g}(X)\leq VaR_{\beta}(L^{mix}_{\omega_b,\omega_g})) = \alpha \ \ \Leftrightarrow \ \ P(X \leq v_{\omega_b,\omega_g}^{-1}(VaR_{\beta}(L^{mix}_{\omega_b,\omega_g}))) = 1-\beta
$$
being $v = \omega_b N\circ h^b + \omega_g N\circ h^g$ stricly increasing, from which we get $VaR_{1-\beta}(X) = v_{\omega_b,\omega_g}^{-1}(VaR_{\beta}(L^{mix}_{\omega_b,\omega_g}))$, finally giving
$$
VaR_{\beta}(L^{mix}_{\omega_b,\omega_g}) = v_{\omega_b,\omega_g}(VaR_{1-\beta}(X)) = \omega_b L_b(VaR_{1-\beta}(X))+\omega_gL_g(VaR_{1-\beta}(X)),
$$
where $L_b\equiv N\circ h^b$ and $L_g\equiv N\circ h^g$. The value-at-risk is therefore \textit{linear} w.r.t. the portfolio proportions $\omega_b, \omega_g$.

\medskip
 
Moreover, it is worth noting that if $\delta \equiv 0$, the systematic factor $X$ becomes standard normal $N(0,1)$, and the asymptotic loss rate (\ref{asymp2}) reduces to the model introduced in \cite{agl21}.

On the other hand, if $p^{(g)} = p^{(b)}$ and $\rho_g = \rho_b$, then $L_b = L_g \equiv L$. Hence, we get a non-partitioned credit portfolio that reproduces the classical Vasicek Gaussian LHP approximation model for $\delta = 0$. If $\delta \neq 0$, we end up with the extension of the LHP approximation to the case of skew-normal one-factor model (see \cite{lp15}),
\begin{equation} \label{asympSN}
    L = v_{\omega_b,\omega_g}(X) \equiv v(X), \ \ \ X \sim SN(0,1,\frac{\delta}{\sqrt{1-\delta^2}}).
\end{equation}

It is easily seen that, in such a case, the level-$\beta$ value-at-risk of $L$, is given by
$$
VaR_{\beta}(L) = N\left(\frac{K-\rho VaR_{1-\beta}(X)}{\sqrt{1-\rho^2}} \right),
$$
while the corresponding density is
$$
f_L(\ell) = 2 \frac{\sqrt{1-\rho^2}}{\rho} N(\alpha v^{-1}(\ell)) \exp\{-\frac{1}{2}(v^{-1}(\ell)^2-N^{-1}(\ell)^2 )\}
$$
where $v^{-1}(\ell) = (K-\sqrt{1-\rho^2}N^{-1}(\ell))/\rho$.
}

\bigskip

Loan portfolios are typically non-granular, with (normalized) exposure weights exhibiting diverse structural patterns, see \cite{Lutke08}. In light of our convergence result (see Proposition~\ref{Convergence}), we aim to identify sufficient conditions under which the Herfindahl–Hirschman Index (HHI) becomes asymptotically negligible as $N \longrightarrow +\infty$, even for broad classes of weight distributions. Specifically, we examine weights that follow a power-law decay, i.e. $u_i \propto i^{-a}$. In this setting, the corresponding concentration index takes generally the form
$$
HHI_N(\mathbf{u}) \equiv HHI_N(a)  = \frac{1}{C_N^2} \sum_{i=1}^N \frac{1}{i^{2a}}, 
$$
where $C_N \equiv \sum_{i=1}^N \frac{1}{i^a}$.

In order to find sufficient conditions for the convergence, we use the following (very general) asymptotic approximation, $b\neq 1$:
\begin{equation} \label{asyInteg}
\sum_{i=1}^N i^{-b} \approx \int_1^N x^{-b} dx = \left\{ \begin{array}{ll} \frac{N^{1-b}-1}{1-b} & b < 1 \\  \\ \frac{1- N^{1-b}}{b-1}  & b>1. \end{array} \right.
\end{equation}

Now, by using the approximation (\ref{asyInteg}), we get
$$
HHI_N(a) \approx \begin{cases} \frac{(1-a)^2}{1-2a} \frac{N^{1-2a} - 1}{(N^{1-a}-1)^2} & a <1/2 \\  \\
\frac{(1-a)^2}{2a-1}  \frac{1-N^{1-2a} }{(N^{1-a}-1)^2} &  1/2 < a < 1 \\ \\ 
\frac{(a-1)^2}{2a-1}  \frac{1-N^{1-2a} }{(1-N^{1-a})^2} &  a>1 
\end{cases}   = \begin{cases} O(N^{-1})        & a <1/2 \\  \\
                              O(N^{-2(1-a)}) &  1/2 < a < 1 \\ \\ 
                              O(1)                 &  a>1 
                \end{cases}                 
$$
so that we may conclude
\begin{equation}
\lim_{N\rightarrow + \infty} HHI_N(a) = \begin{cases} 0 & a <1 \\ \\  \frac{(a-1)^2}{2a-1}  & a > 1. \end{cases}
\end{equation}

This choice in the characterization of percentage exposures is motivated by empirical evidence drawn from the analysis of exposure data for a selected group of systemically important banks: in particular, we considered BNP Paribas (BNP), Mitsubishi UFJ Financial Group (MUFG), Crédit Agricole (CA), Société Générale (SG), HSBC, and Mizuho Financial Group (MHFG). Figure \ref{Fig:exposures} presents the distribution of loan exposures, sorted in descending order. The analysis is based on publicly available deal-level data sourced from Refinitiv (now LGSE) and should be interpreted as a proxy for the true distribution of credit exposures. This approximation is necessary due to incomplete disclosure of commitment shares and role-specific allocations across syndicated loans.
For each selected bank, various loan instruments are included—most notably term loans, revolving credit facilities, and standby letters of credit—where the bank may act as Lead Arranger, Mandated Arranger, Arranger, Lender/Participant, Book Runner, Admin Agent, Syndication Agent..... Since commitment shares were not reported, we adopted a uniform allocation assumption by distributing the loan amount equally among all participating institutions, consistent with practices in the absence of granular data. The final exposure figures were then aggregated at the borrower level. While this introduces simplification, the approach remains informative for comparative exposure analysis and aligns with methodologies commonly employed in empirical studies on credit networks and systemic risk (cf. Cont et al., 2013; Glasserman and Young, 2016).
In practice, we fitted the model $u_i = c/(b+i)^a$ to the observe ordered percentage exposures of each bank. The results of the fitting exercise are reported in Table \ref{powlaw_results}. Furthermore, by cross-referencing the loan-level data obtained from Refinitiv with the environmental classification information available in the EF dataset for the selected banks, we estimate that green loans account, on average, for approximately $25\%$ of total loan exposures, with observed values ranging from a minimum of $8\%$ to a maximum of $40\%$.

\begin{figure}[h]
\begin{center}
\includegraphics[width=18cm,height=12cm]{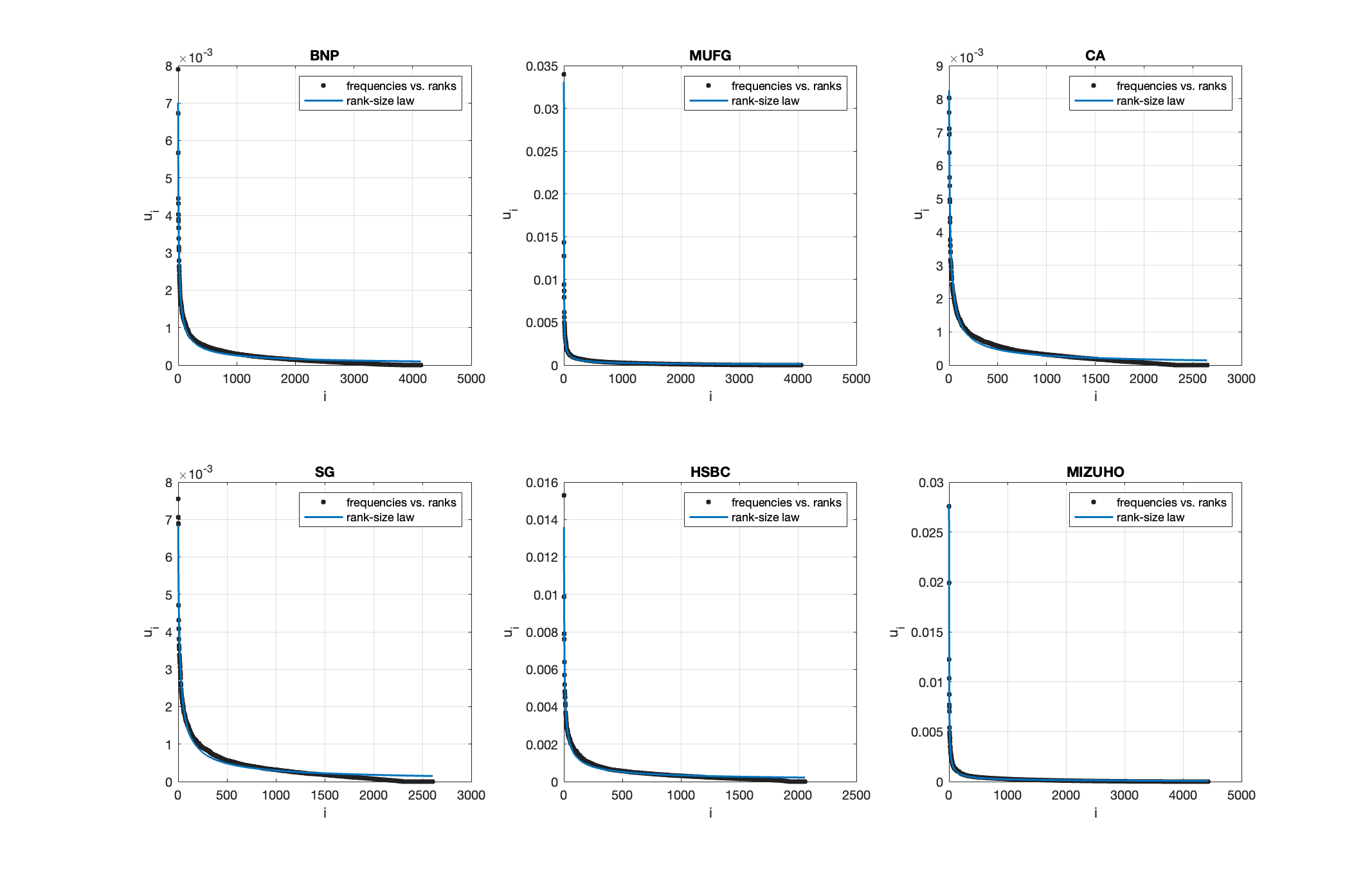}
\end{center}
\caption{Observed percentage exposures of main banks (dots) and fitted power-law model: data from Refinitiv - Loan App. }
\label{Fig:exposures}
\end{figure}

\begin{table}[h!]
\centering
\begin{tabular}{l
                S[table-format=1.4]
                S[table-format=1.4]
                S[table-format=1.4]
                S[table-format=1.3]}
\toprule
\textbf{Bank} & \textbf{a} & \textbf{b} & \textbf{c} & \textbf{HHI} \\
\midrule
\multirow{3}{*}{BNP}
& 0.6888 & 3.399 & 0.0308 & {---} \\
& {(0.6835, 0.6942)} & {(3.2190, 3.5800)} & {(0.0299, 0.0318)} & \num{8.92e-04} \\
& \multicolumn{2}{l}{rmse = \num{7.08e-05}} & \multicolumn{2}{l}{r2 = \num{0.9681}} \\
\midrule
\multirow{3}{*}{MUFG}
& 0.6359 & -0.378 & 0.02059 & {---} \\
& {(0.6319, 0.6399)} & {(-0.3905, -0.3655)} & {(0.02023, 0.02096)} & \num{2.68e-03} \\
& \multicolumn{2}{l}{rmse = \num{1.05e-04}} & \multicolumn{2}{l} {r2 = \num{0.9818}} \\
\midrule
\multirow{3}{*}{CA}
& 0.738 & 4.859 & 0.04838 & {---} \\
& {(0.7308, 0.7453)} & {(4.554, 5.163)} & {(0.04641, 0.05036)} & \num{1.29e-03} \\
& \multicolumn{2}{l}{rmse = \num{1.0025e-04}} & \multicolumn{2}{l}{r2 = \num{0.9708}} \\
\midrule
\multirow{3}{*}{SG}
& 0.7446 &  6.714 & 0.05286 & {---} \\
& {(0.7357, 0.7535)} & {(6.209, 7.219)} & {(0.05012, 0.05559)} & \num{1.1987e-03} \\
& \multicolumn{2}{l}{rmse = \num{1.1045e-04}} & \multicolumn{2}{l}{r2 = \num{0.9610}} \\
\midrule
\multirow{3}{*}{HSBC}
& 0.626 & 0.894 & 0.02575 & {---} \\
& {(0.6186, 0.6333)} & {(0.8, 0.988)} & {(0.02474, 0.02675)} & \num{1.6165e-03} \\
& \multicolumn{2}{l}{rmse = \num{1.3622e-04}} & \multicolumn{2}{l}{r2 = \num{0.9662}} \\
\midrule
\multirow{3}{*}{MHFG}
& 0.7084 & 0.1218 & 0.03117 & {---} \\
& {(0.7048, 0.7119)} & {(0.1012, 0.1423)} & {(0.03069, 0.03165)} & \num{2.5639e-03} \\
& \multicolumn{2}{l}{rmse = \num{8.6795e-05}} & \multicolumn{2}{l}{r2 = \num{0.9857}} \\
\bottomrule
\end{tabular}
\caption{Power-law fitting results for the percentage exposures, $u_i = c/(b+i)^a$. Root mean squared errors (rmse) and R-squared goodness-of-fit (r2), as obtained by the MatLab Non-linear least-square function are reported.} \label{powlaw_results}
\end{table}

\section{Numerical Results}
In this section, we present numerical experiments designed to investigate the behavior of green-brown portfolio losses under the proposed credit risk model. We first describe the simulation framework, which combines a power-law distribution of exposures with a skewed systematic risk factor that affects the two distinct portfolio segments. The convergence of the simulated loss distribution towards its asymptotic limit is examined, highlighting the impact of exposure granularity and model parameters on the value-at-risk. Subsequently, we performed a sensitivity analysis to assess how variations in default probabilities, asset correlations, and exposure concentration affect the shape and dispersion of the limit distribution. These results provide insights into the robustness of the asymptotic approximation and the influence of portfolio composition, in particular, on the value-at-risk.

\subsection{Convergence analysis}

The simulation of portfolio losses is highly influenced by the probabilities of default (PDs), which determine the corresponding default thresholds. In analyzing the properties of the Large Homogeneous Portfolio (LHP) approximation, we focus on two baseline scenarios characterized, respectively, by $ p_G < p_B$ and $ p_G > p_B$, under the assumption of identical factor loadings.
In the first scenario, we set values $p_G = 0.005$ and $p_B = 0.01$, which are representative of typical PDs for large corporate exposures. The factor loading was fixed at $\rho = 0.1$, consistent with normal market conditions for corporate portfolios. 
In the second scenario, we assign $p_G = 0.2$ and $p_B = 0.15$, together with a moderately higher asset correlation of $\rho = 0.15$. This setting reflects a portfolio composed predominantly of green loans extended to medium-sized enterprises (SMEs), which are typically associated with higher default risk and slightly elevated asset correlations (see, e.g., \cite{EBA2024}, \cite{EBACasel2023}).
For each scenario, we further investigate the impact of portfolio concentration considering two different decay parameters $a$ for the exposure distribution. Specifically, we assume a power-law distribution for the ranked exposures $u_i \propto 1/i^{a}, \, i = 1, \ldots, n $. In the first case, we set $a = 0$, corresponding to a uniform (infinitely granular) portfolio. In the second case, we use $a = 0.6$, a value empirically observed across a sample of major banks (see Table~\ref{powlaw_results}).
This modeling framework allows us to capture both the effects of PD heterogeneity and exposure concentration on the performance of the LHP approximation under different market conditions.

We simulated the loss of portfolios of increasing size, from $n=500$ up to $n=5000$, according to the model described in Sect. 1, i.e., credit defaults are simulated following a factor structural model with systematic skewness. For each simulation, a common skew-normal factor $X\sim SN(\alpha)$ is generated to represent systematic risk, while idiosyncratic risk terms $Z_i \sim N(0,1)$ are drawn independently for each obligor. Defaults occur when the asset value falls below a pre-computed threshold $K$, calibrated to match the target probability of default, in our case $p_G$ and $p_B$, and are obtained by inverting the skew-normal cdf (see (\ref{snorm_cdf})). We fix the skewness parameter $\alpha = 0.5$. The portfolio loss in each simulation is calculated as the weighted sum of individual default indicators, taking into account the normalized exposures.
A total of $N_{samp}=10^{6}$  Monte Carlo replications are performed to estimate the portfolio loss distribution, from which empirical quantiles are calculated for three different levels, $\beta = 0.99, 0.995, 0.999$. The asymptotic theoretical quantiles are finally obtained by numerically inverting the cdf of the limiting LHP approximation. 
The results of the numerical simulation are reported in Figure \ref{fig:varconvergence}, where the absolute error is plotted against the dimension of the portfolio, and Tables \ref{Tab:convscenario1}-\ref{Tab:convscenario2}, in which the empirical quantiles and the error with respect to the asymptotic theoretical quantiles are shown with the corresponding value of the concentration of the green and brown sub-portfolios, $HHI_g$ and $HHI_b$, respectively. 

The numerical experiments confirm the theoretical expectations regarding the convergence properties of the simulated portfolio losses toward their asymptotic limit. First, convergence deteriorates as the exposure concentration increases, corresponding to higher values of the power-law decay parameter $a$. This behavior is consistent with the convergence theorem for large portfolios, which requires a sufficient degree of granularity to ensure accurate asymptotic approximations. Second, convergence becomes slower in the extreme tail of the loss distribution, particularly at high quantile levels such as the $99.9\%$ percentile. Moreover, convergence is worse under Scenario 1, where default probabilities are lower, leading to sparser default events. In such settings, rare default occurrences seem to delay the stabilization of the distribution toward its theoretical limit. Quantitatively, the relative error at the $\beta = 99.9\%$ quantile increases significantly with exposure concentration: for Scenario 1, it rises from approximately $7\%$ for the granular case $(a=0)$ to about $44\%$ when $a=0.6$. In Scenario 2, where the default rates are higher, the corresponding error increases more moderately, from roughly $2.3\%$ to $10\%$. These findings emphasize the crucial role of both portfolio structure and credit quality in determining the accuracy of asymptotic risk estimates, particularly in the tails of the distribution.

\begin{figure}[h]
\begin{center}
\includegraphics[width=6.5cm,height=5cm]{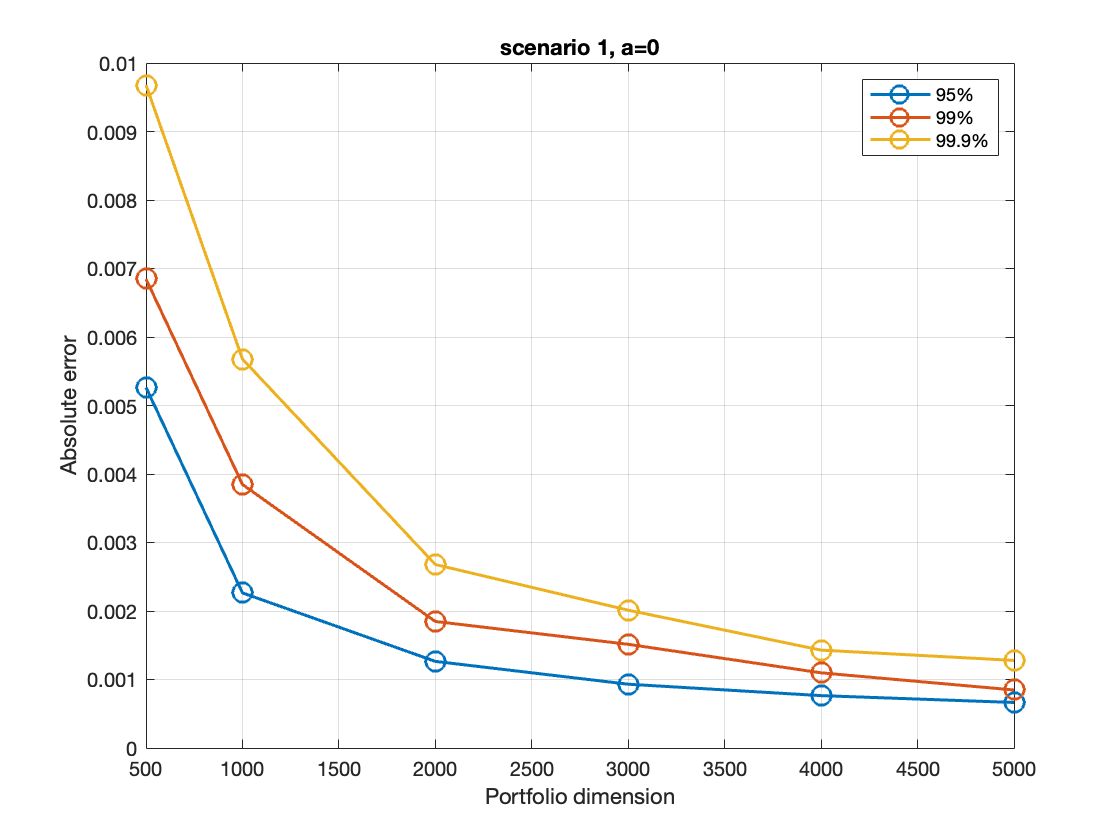}
\includegraphics[width=6.5cm,height=5cm]{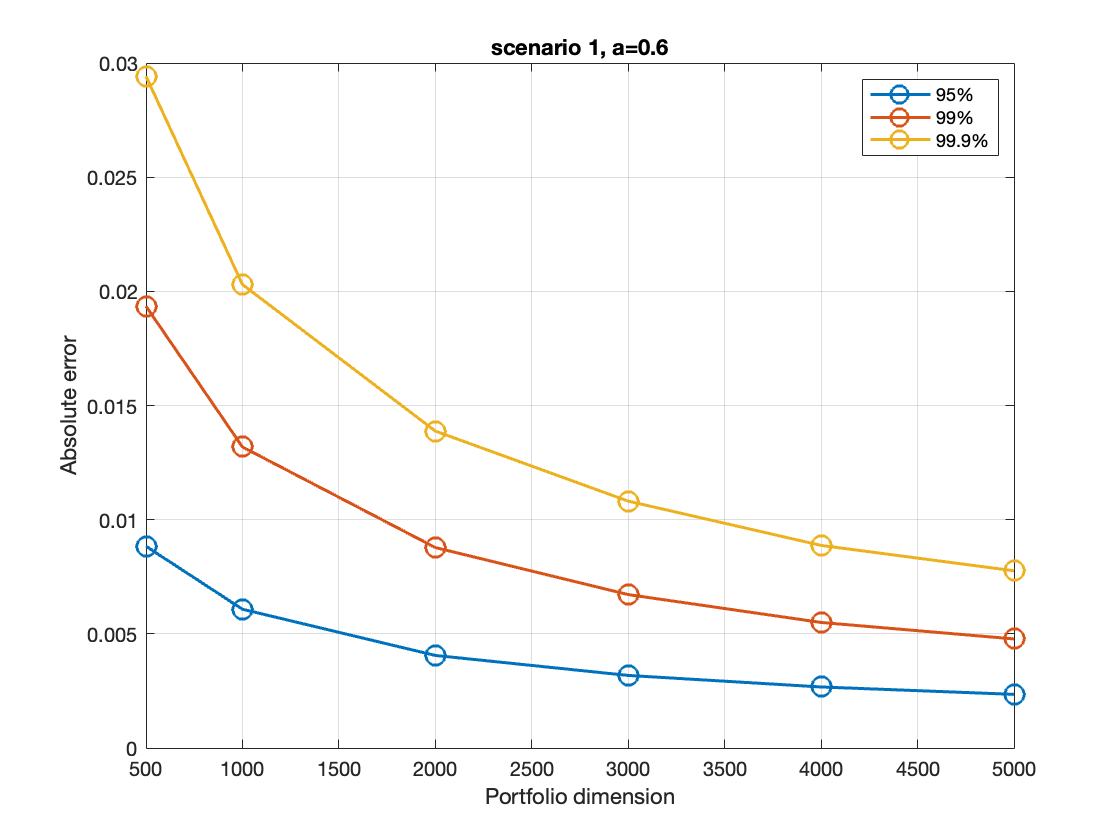}
\includegraphics[width=6.5cm,height=5cm]{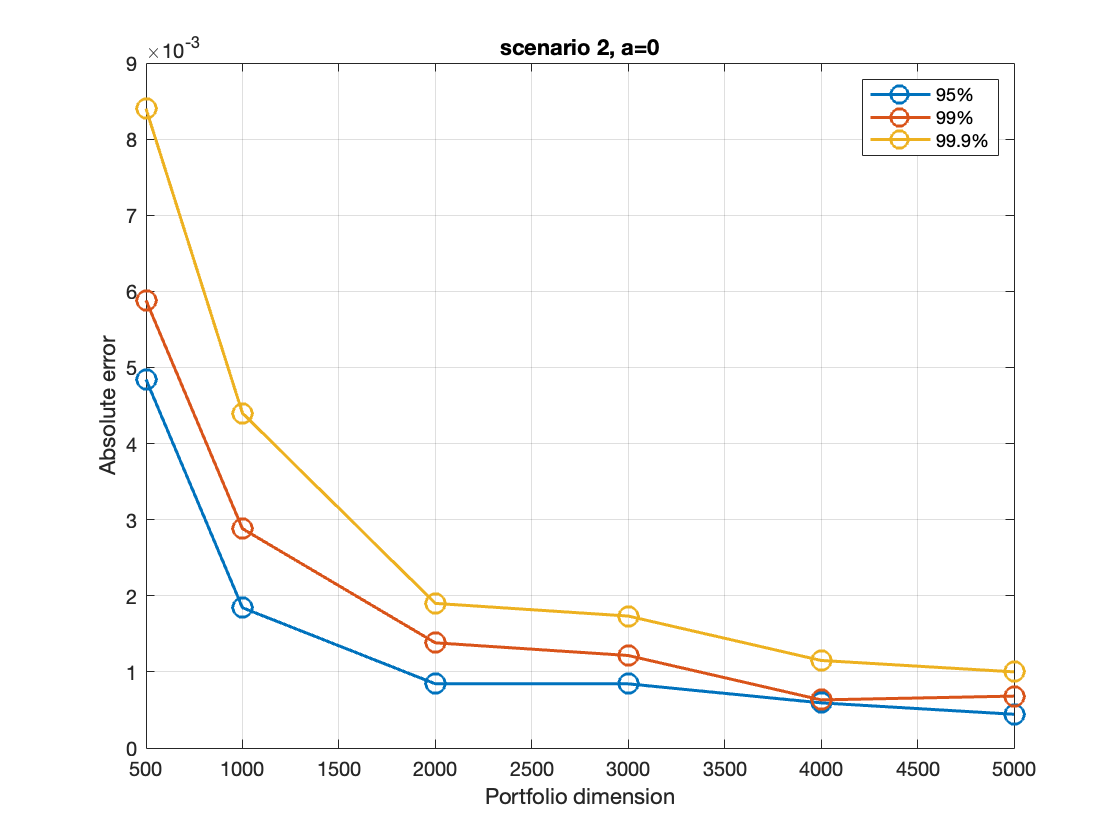}
\includegraphics[width=6.5cm,height=5cm]{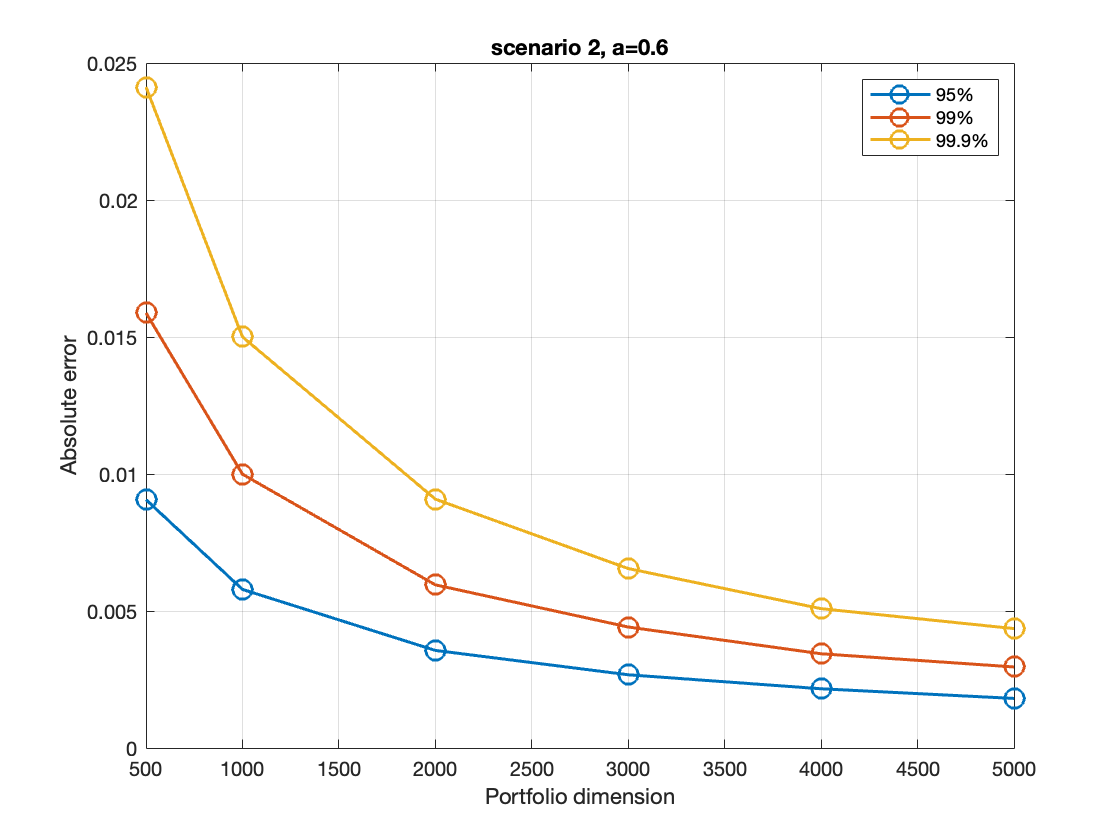}
\end{center}
\caption{Absolute errors between the  empirical quantiles and the theoretical quantile for increasing portfolio dimension for the four considered scenarios.}
\label{fig:varconvergence}
\end{figure}

\begin{table}[h!]
\centering
\begin{tabular}{l|ccccc|ccccc|}
\hline \hline
 & \multicolumn{3}{c}{\textbf{VaR}} & $\mathbf{HHI}_g$ &$\mathbf{HHI}_b$  & &  {\textbf{VaR}} & &$\mathbf{HHI}_g$ &$\mathbf{HHI}_b$\\
\cmidrule(lr){2-4} \cmidrule(lr){5-6} \cmidrule(lr){7-9} \cmidrule(lr){10-11}
 $n$ & 99\% & 99.5\% & 99.9\% &  \multicolumn{2}{c|}{$a=0$}  & 99\% &99.5\% & 99.9\% & \multicolumn{2}{c|}{$a=0.6$} \\
\midrule

\multirow{2}{*}{$500$} 
   & 0.0180 & 0.0220 & 0.0280 & & &  0.0212  &  0.0339 &   0.0468 & &\\
   & (0.0053) &   (0.0068) &   (0.0097) & 0.0066 & 0.0028 & (0.0088) &   (0.0193)  &  (0.0294) & 0.0136 & 0.0070\\
\midrule

\multirow{2}{*}{$1000$} 
   & 0.0150  &  0.0190  &  0.0240 & & & 0.0185  &  0.0278  &  0.0377 & &\\
   & (0.0023)  &  (0.0038)  &  (0.0057) & 0.0033 & 0.0014 & (0.0061) &   (0.0132) &   (0.0203) & 0.0079 & 0.0040\\
\midrule

\multirow{2}{*}{$2000$} 
   & 0.0140  &  0.0170  &  0.0210 & & & 0.0165  &  0.0234  &  0.0313 & &\\
   &(0.0013)  &  (0.0018)  &  (0.0027) & 0.0017 & 0.0007 & (0.0041)  &  (0.0088)  &  (0.0139) & 0.0046 & 0.0023\\
\midrule

\multirow{2}{*}{$3000$} 
   & 0.0137  &  0.0167  &  0.0203 & & & 0.0156  &  0.0213  &  0.0282& &\\
   & (0.0009)  &  (0.0015)  &  (0.0020) & 0.0011 & 0.0005 & (0.0032)  &  (0.0067)  &  (0.0108) & 0.0033 & 0.0017 \\
\midrule

\multirow{2}{*}{$4000$} 
   & 0.0135  &  0.0162  &  0.0198 & & & 0.0151  &  0.0201  &  0.0262& &\\
   & (0.0008)  &  (0.0011)  &  (0.0014) & 0.0008 & 0.0004 & (0.0027)  & (0.0055) & (0.0089) & 0.0026 & 0.0014\\
\midrule

\multirow{2}{*}{$5000$} 
   & 0.0134 & 0.0160 & 0.0196 & & & 0.0148 & 0.0194 & 0.0251 & &\\
   & (0.0007) &   (0.0008)  &  (0.0013) & 0.0007 & 0.0003 &  (0.0023)  &  (0.0048)  &  (0.0078) & 0.0022 & 0.0011\\
\hline \hline
\end{tabular}
\caption{Numerical results related to converge analysis for scenario 1, with $a=0$ and $a=0.6$. Empirical quantiles are reported, and, in parentheses, the error with respect to the theoretical value, $\widehat{VaR_{\beta}} - VaR_{\beta}$.  }
\label{Tab:convscenario1}
\end{table}

\begin{table}[h!]
\centering
\begin{tabular}{l|ccccc|ccccc|}
\hline \hline
 & \multicolumn{3}{c}{\textbf{VaR}} & $\mathbf{HHI}_g$ &$\mathbf{HHI}_b$  & &  {\textbf{VaR}} & &$\mathbf{HHI}_g$ &$\mathbf{HHI}_b$\\
\cmidrule(lr){2-4} \cmidrule(lr){5-6} \cmidrule(lr){7-9} \cmidrule(lr){10-11}
 $n$ & 99\% & 99.5\% & 99.9\% &  \multicolumn{2}{c|}{$a=0$}  & 99\% &99.5\% & 99.9\% & \multicolumn{2}{c|}{$a=0.6$} \\
\midrule

\multirow{2}{*}{$500$} 
   &  0.0320  &  0.0400  &  0.0520 & & &  0.0354  &  0.0484 &   0.0649 & &\\
   & (0.0048) &   (0.0059) &   (0.0084) & 0.0066 & 0.0028 & (0.0091) &   (0.0159)  &  (0.0241) & 0.0136 & 0.0070\\
\midrule

\multirow{2}{*}{$1000$} 
   & 0.0290  &  0.0370  &  0.0480 & & & 0.0321  &  0.0425  &  0.0558 & &\\
   & (0.0018)  &  (0.0029)  &  (0.0044) & 0.0033 & 0.0014 & (0.0058) &   (0.0100) &   (0.0150) & 0.0079 & 0.0040\\
\midrule

\multirow{2}{*}{$2000$} 
   & 0.0280  &  0.0355  &  0.0455 & & & 0.0299  &  0.0384  &  0.0499 & &\\
   &(0.0008)  &  (0.0014)  &  (0.0019) & 0.0017 & 0.0007 & (0.0036)  &  (0.0060)  &  (0.0091) & 0.0046 & 0.0023\\
\midrule

\multirow{2}{*}{$3000$} 
   & 0.0280  &  0.0353  &  0.0453 & & & 0.0290  &  0.0369  &  0.0473& &\\
   & (0.0008)  &  (0.0012)  &  (0.0017) & 0.0011 & 0.0005 & (0.0027)  &  (0.0044)  &  (0.0065) & 0.0033 & 0.0017 \\
\midrule

\multirow{2}{*}{$4000$} 
   & 0.0277  &  0.0348  &  0.0447 & & & 0.0285  &  0.0359  &  0.0459& &\\
   & (0.0006)  &  (0.0006)  &  (0.0012) & 0.0008 & 0.0004 & (0.0022)  & (0.0034) & (0.0051) & 0.0026 & 0.0014\\
\midrule

\multirow{2}{*}{$5000$} 
   & 0.0276  &  0.0348  &  0.0446 & & & 0.0281 & 0.0355 & 0.0452 & &\\
   & (0.0004) &   (0.0007)  &  (0.0010) & 0.0007 & 0.0003 &  (0.0018)  &  (0.0030)  &  (0.0044) & 0.0022 & 0.0011\\
\hline \hline
\end{tabular}
\caption{Numerical results related to converge analysis for scenario 2, with $a=0$ and $a=0.6$. Empirical quantiles are reported, and, in parentheses, the error with respect to the theoretical value, $\widehat{VaR_{\beta}} - VaR_{\beta}$.  }
\label{Tab:convscenario2}
\end{table}

\subsection{Sensitivity Analysis of Model Parameters}

In the second part of the numerical analysis, we examine how the shape and risk profile of the limiting loss distribution respond to key model parameters. As a preliminary step, we consider an unpartitioned portfolio to isolate the impact of the skewness parameter of the systematic risk factor on tail risk, as measured by the value-at-risk. Subsequently, we extended the analysis to the two-segment (green–brown) portfolio structure, exploring the effects of varying default probabilities and factor loadings. This allows us to assess how heterogeneity in credit quality and exposure to systematic risk influence the distributional characteristics of portfolio losses under the large portfolio approximation.

\paragraph{The role of $\alpha$.} 
To isolate the effect of the skewness parameter on portfolio risk, we first analyzed an unpartitioned portfolio under four representative configurations of market conditions. For normal market conditions, we considered two distinct borrower segments: large corporations and small-to-medium enterprises (SMEs). For these, we selected typical combinations of default probability and factor loadings, $(p, \rho)$, that is, $(0.01, 0.15)$ for large firms and $(0.03, 0.20)$ for SMEs. To model stressed market environments, we increased both parameters and considered $(0.04, 0.20)$ and $(0.06, 0.25)$, respectively. For each configuration, we varied the skewness parameter $\alpha$ of the systematic risk factor over a range of positive and negative values to assess its effect on tail risk.

The results shown in Figure \ref{fig:var_alpha} confirm the intuitive pattern that the value-at-risk increases with worsening credit quality, that is, with higher PD and correlation values. However, a less expected outcome emerged regarding the role of skewness: across all market scenarios, introducing skewness, either positive or negative, led to a decrease in the estimated value-at-risk at high confidence levels with respect to the normal ($\alpha\equiv 0$) model. This behavior suggests that, under the model assumptions, the presence of skew in the systematic factor introduces asymmetries in the loss distribution that reduce the likelihood of extreme losses in the upper tail. 

\begin{figure}[h]
\centering
\includegraphics[width=16cm,height=8cm]{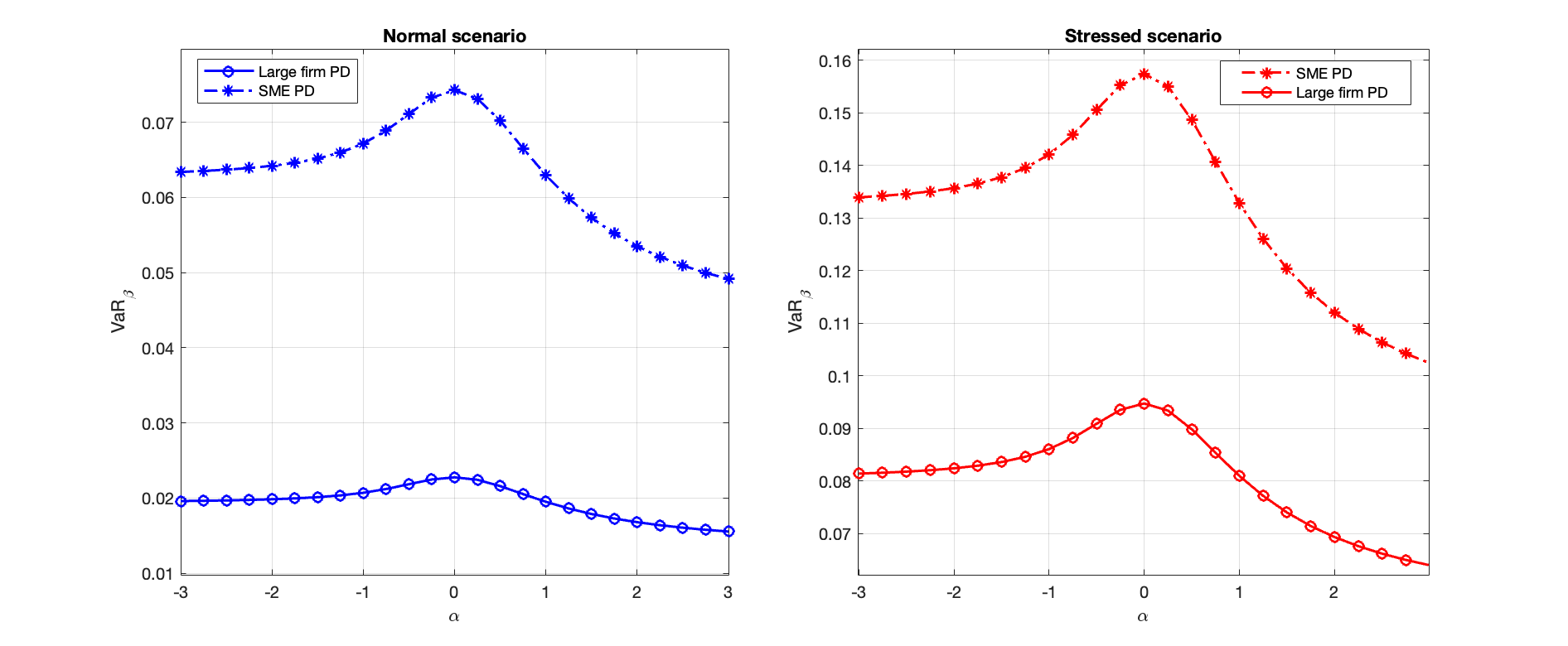}
\caption{The value-at-risk at level $\beta = 0.99$ for different values of the parameter $\alpha$. The four curves correspond to typical values for $(p,\rho)$ related to normal and stressed market conditions.}
\label{fig:var_alpha}
\end{figure} 

\paragraph{The green-brown portfolio.}

To investigate the influence of model parameters on the LHP approximation of the loss distribution for the green-brown portfolios, four distinct scenarios were designed to isolate the effects of differences in default probabilities and factor loadings between the two segments:

\begin{enumerate}
    \item \textit{Scenario 1}: $ \rho_G < \rho_B $, $ p_G = p_B $. Both sub-portfolios share the same credit quality, but green loans are assumed to be less sensitive to systematic risk than brown loans. This setup models a situation in which green lending is more sensitive to idiosyncratic shocks than to systemic changes.
    
    \item \textit{Scenario 2}: $ \rho_G > \rho_B $, $p_G = p_B $. Again, default probabilities are equal, but here green loans exhibit higher exposure to the systematic factor. This might represent cases where green investments are concentrated in sectors more vulnerable to macroeconomic shocks (e.g., renewable energy under volatile subsidies or market prices).
    
    \item \textit{Scenario 3}: $ \rho_G = \rho_B $, $p_G < p_B $. Both segments respond equally to systematic risk, but green loans are characterized by lower probabilities of default, reflecting a higher average credit quality. This configuration corresponds to the view that green investments tend to attract more stable borrowers or benefit from external support mechanisms.
    
    \item \textit{Scenario 4}: $ \rho_G = \rho_B $, $ p_G > p_B $. The reverse of Scenario 3, this setup assumes green loans are riskier in terms of default likelihood, despite similar exposure to systematic shocks. This might reflect early-stage or innovative projects with elevated credit risk profiles.
\end{enumerate}
These configurations allow us to clarify how the asymmetries in risk characteristics between green and brown exposures affect the shape and behavior of the limiting loss distribution.

The parameter values for the sub-portfolios were selected based on a baseline default probability of $p_G = p_B = 0.028$, with asset correlations set to $\rho_G = 0.10$ and $\rho_B = 0.15$ in Scenario 1, and $\rho_G = 0.15$ and $\rho_B = 0.10$ in Scenario 2. For Scenarios 3 and 4, which introduce asymmetry in default probabilities, we set $p_G = 0.02$, $p_B = 0.028$ (Scenario 3), and $p_G = 0.030$, $p_B = 0.028$ (Scenario 4), while maintaining a common asset correlation $\rho_G = \rho_B = 0.10$. Across all configurations, the portfolio composition was kept fixed, with green loans accounting for 25\% of total exposure, i.e., $\omega_G = 0.25 = 1 - \omega_B$, as suggested by empirical findings (see Sect. 2).

For each scenario, we report the density of $L^{mix}_{\omega_b,\omega_g}$ given by  (\ref{dens}), see Figure \ref{fig_densities}. It is evident that the right tail of the distribution is affected by a non-zero value of the parameter $\alpha$, and that a positive $\alpha$ has a slightly greater impact. This observation is particularly relevant for the calculation of the value-at-risk (VaR). Hence, we computed VaR for three different confidence levels, $\beta = 0.99, 0.995, 0.999$, as function of $p_G$, taking fixed $p_B=0.028$ in the Scenarios 1 ($\rho_G < \rho_B$) and 2 ($\rho_G > \rho_B$) and under two skewness settings, $\alpha<0$ and $\alpha>0$. The plots reported in Figure \ref{fig:vars_scenario} show that: 1) in all cases, the portfolio VaR increases monotonically with $p_G$, as expected due to the increasing risk profile; 2) the skewness parameter $\alpha$ has a non-negligible effect on VaR, increasingly more pronounced for higher level of confidence; 3) VaR curves rise more steeply when $\rho_G > \rho_B$, highlighting the greater sensitivity of the overall portfolio risk to the systematic factor when the green component is more sensitive to systematic shocks.  Overall, the effect of the factor loading differential becomes more pronounced when the default probability of the green component is lower than that of the brown component. In this case, a higher correlation with the common systematic factor (Scenario 2) amplifies the portfolio's sensitivity to macroeconomic shocks, resulting in a steeper increase in VaR. Conversely, when $p_G$ is considerably greater than $p_B$, this sensitivity difference tends to diminish, and the VaR levels become more comparable across scenarios, suggesting that the dominating influence of the higher marginal default risk of the green sub-portfolio overshadows the role of correlation structure.

Finally, as demonstrated in Remark \ref{rem1}, the quantile dependence on portfolio weights is linear, see Figure \ref{fig:VaRs}. Specifically, as expected, VaR decreases as a function of $\omega_G$ (the relative portfolio weight of green loans), provided that either the default probability or the factor loading of green loans is lower than the corresponding values for nongreen loans.

\begin{figure}[h]
\begin{center}
\includegraphics[width=6.5cm,height=5.5cm]{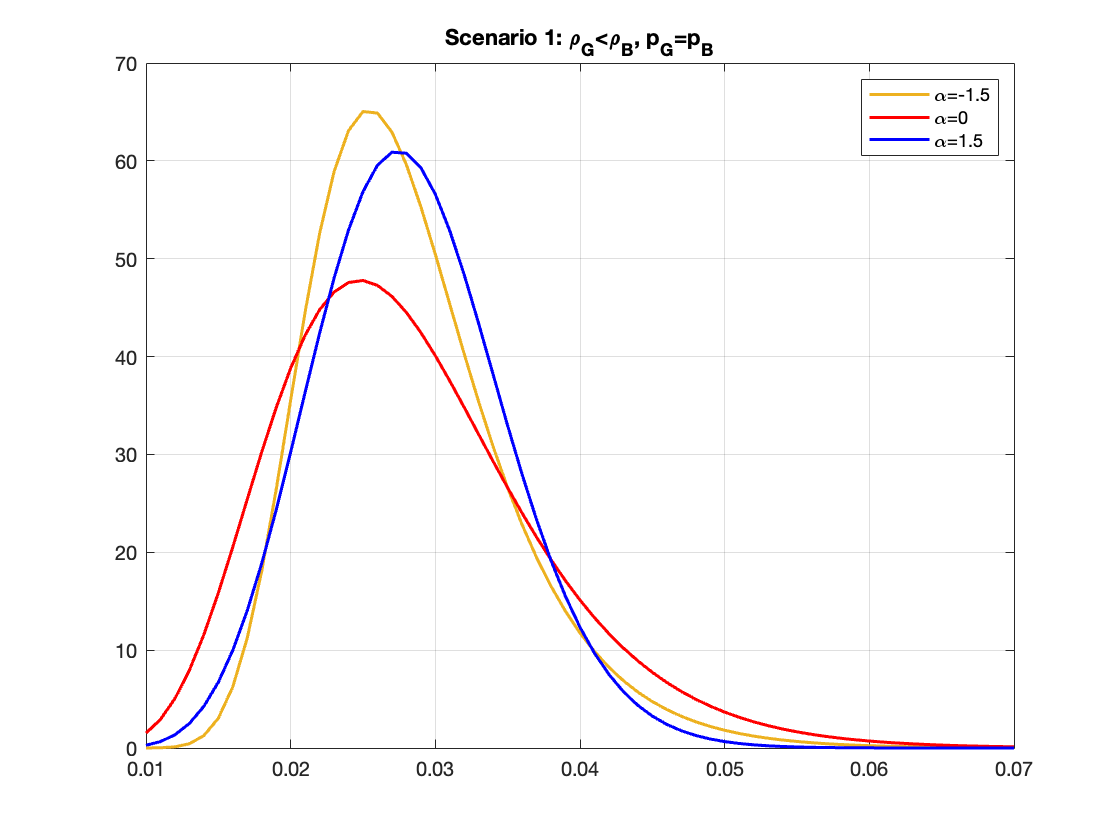}
\includegraphics[width=6.5cm,height=5.5cm]{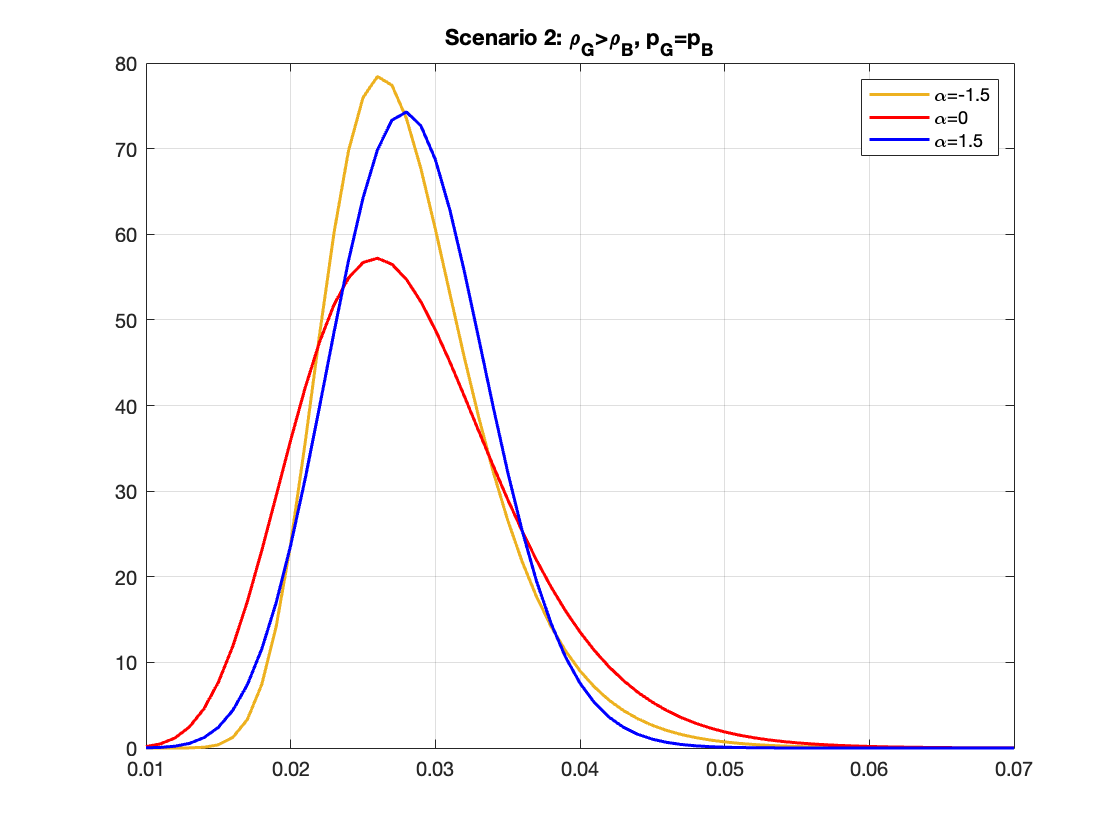}
\includegraphics[width=6.5cm,height=5.5cm]{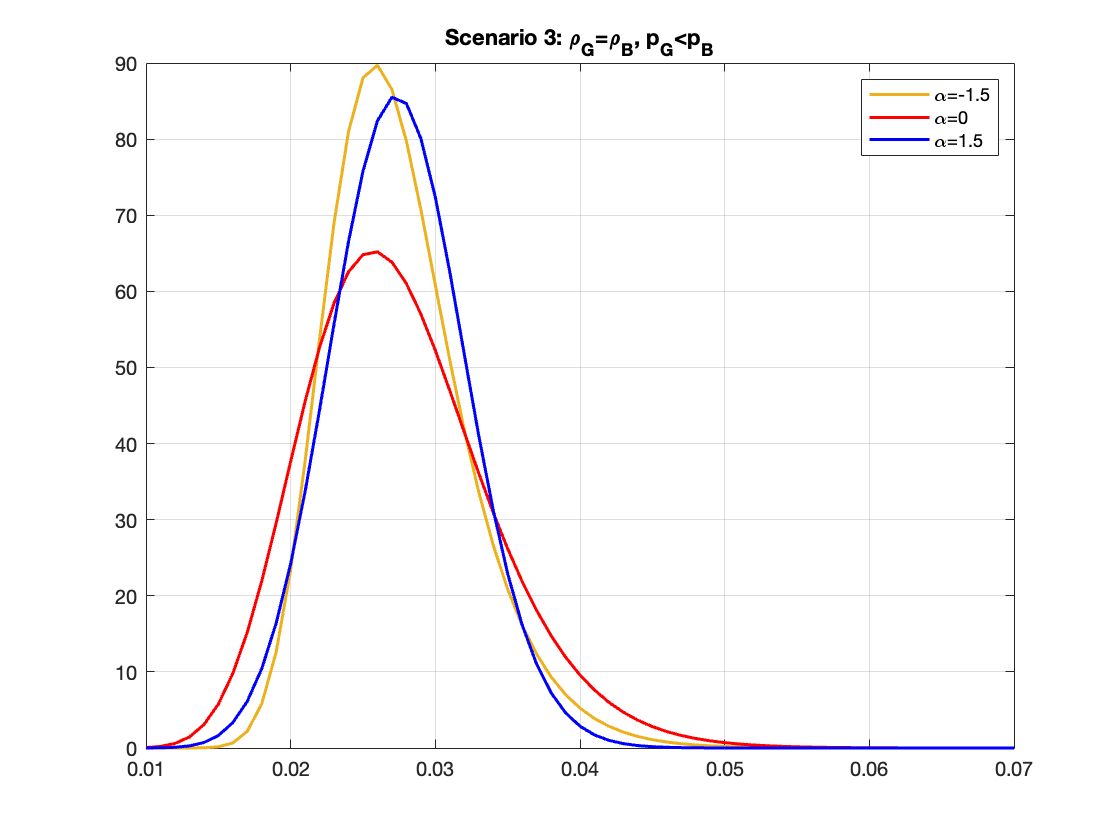}
\includegraphics[width=6.5cm,height=5.5cm]{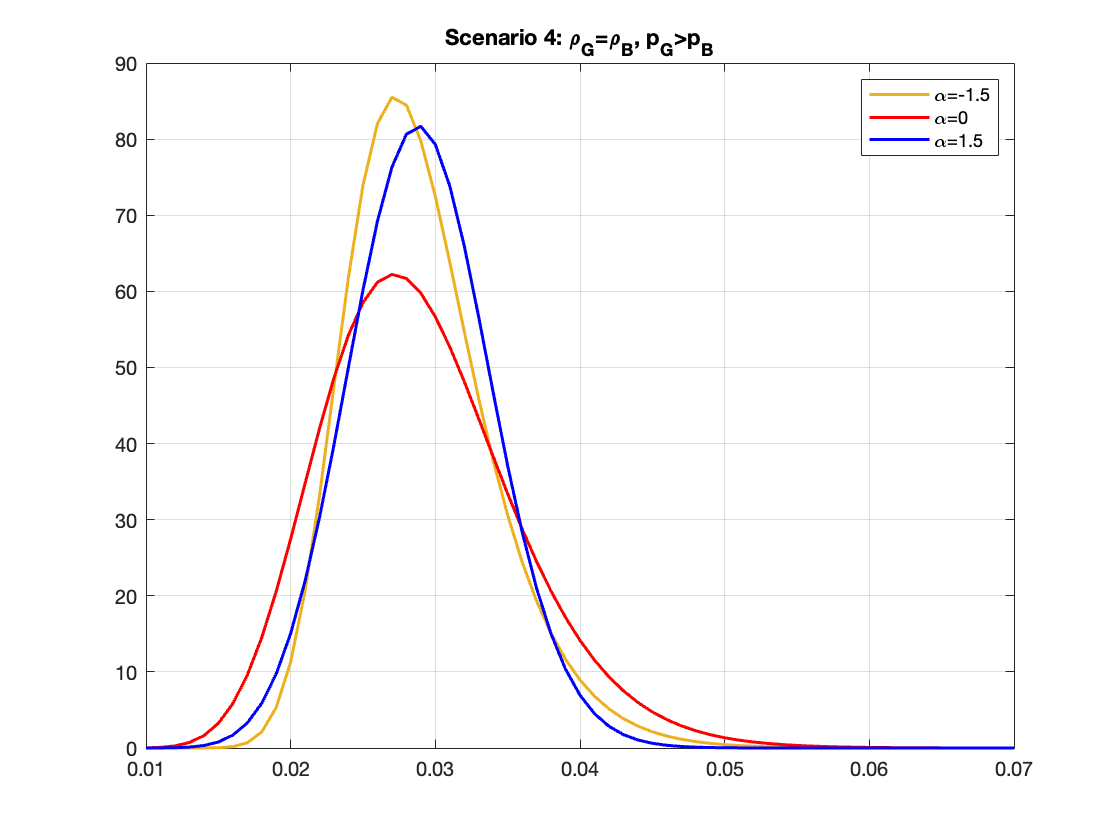}
\end{center}
\caption{The density of $L^{mix}_{\omega_b,\omega_g}$ for the 4 scenarios, and three skewness settings. The case $\alpha=0$ corresponds to the standard normal distribution assumption on the systematic risk factor.}
\label{fig_densities}
\end{figure}

\begin{figure}[h]
\begin{center}
\includegraphics[width=8.5cm,height=8cm]{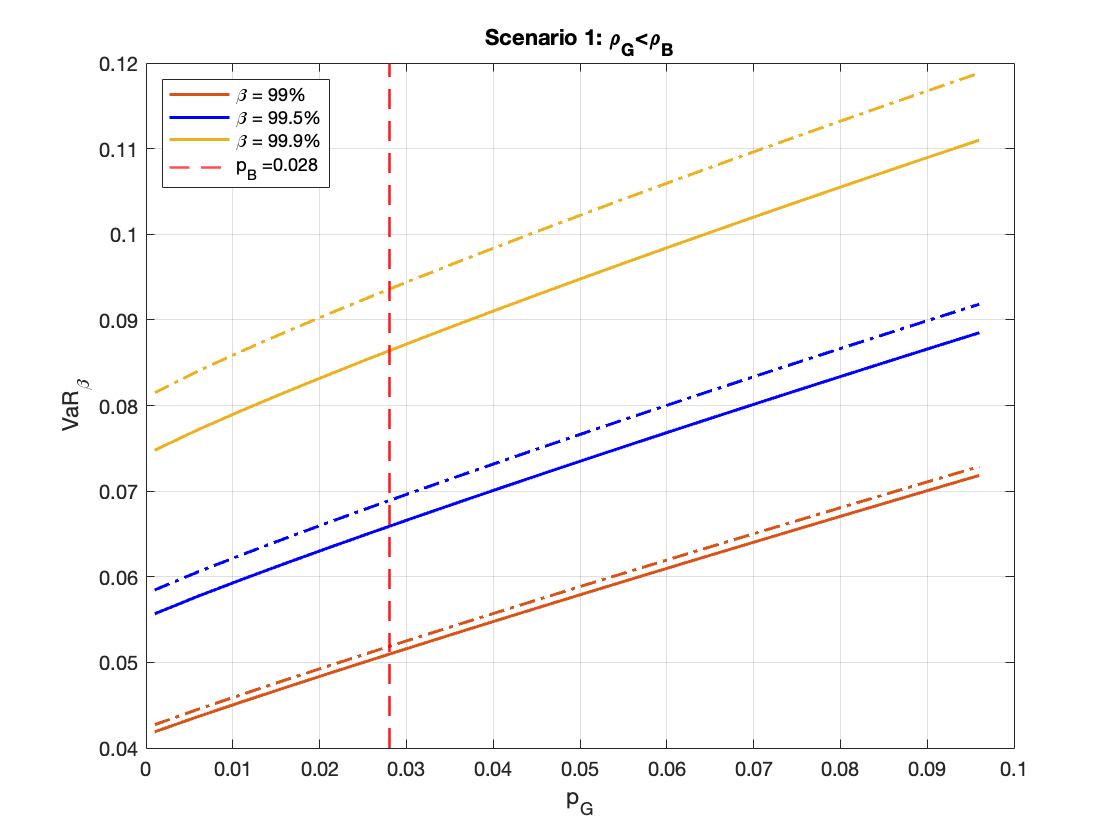}
\includegraphics[width=8.5cm,height=8cm]{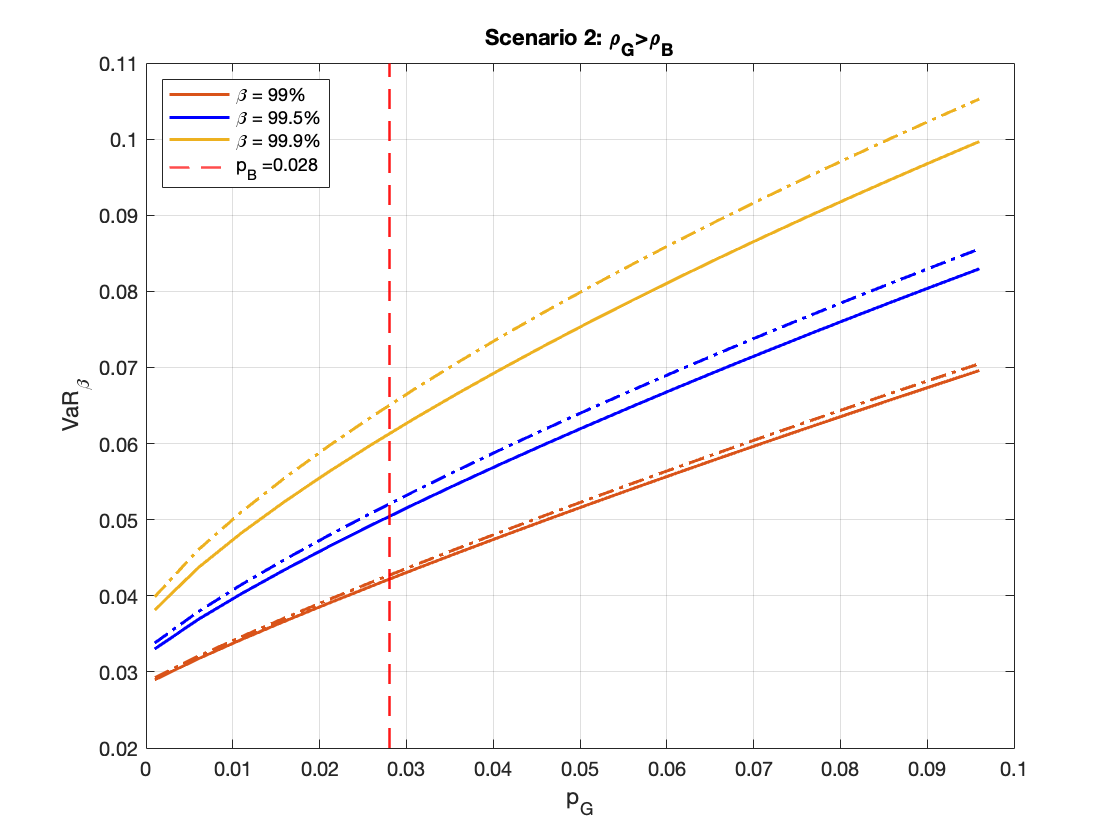}
\end{center}
\caption{VaR profiles of the green-brown portfolio as a function of the default probability of the green portfolio component, $p_G$. The default probability of the brown component is held fixed, $p_B=0.028$. The continuous lines correspond to a negative value of the skew-normal parameter, $\alpha=-0.8$, the dash-dotted line to $\alpha=0.8$.}
\label{fig:vars_scenario}
\end{figure}

\begin{figure}[h]
\begin{center}
\includegraphics[width=6.5cm,height=5.5cm]{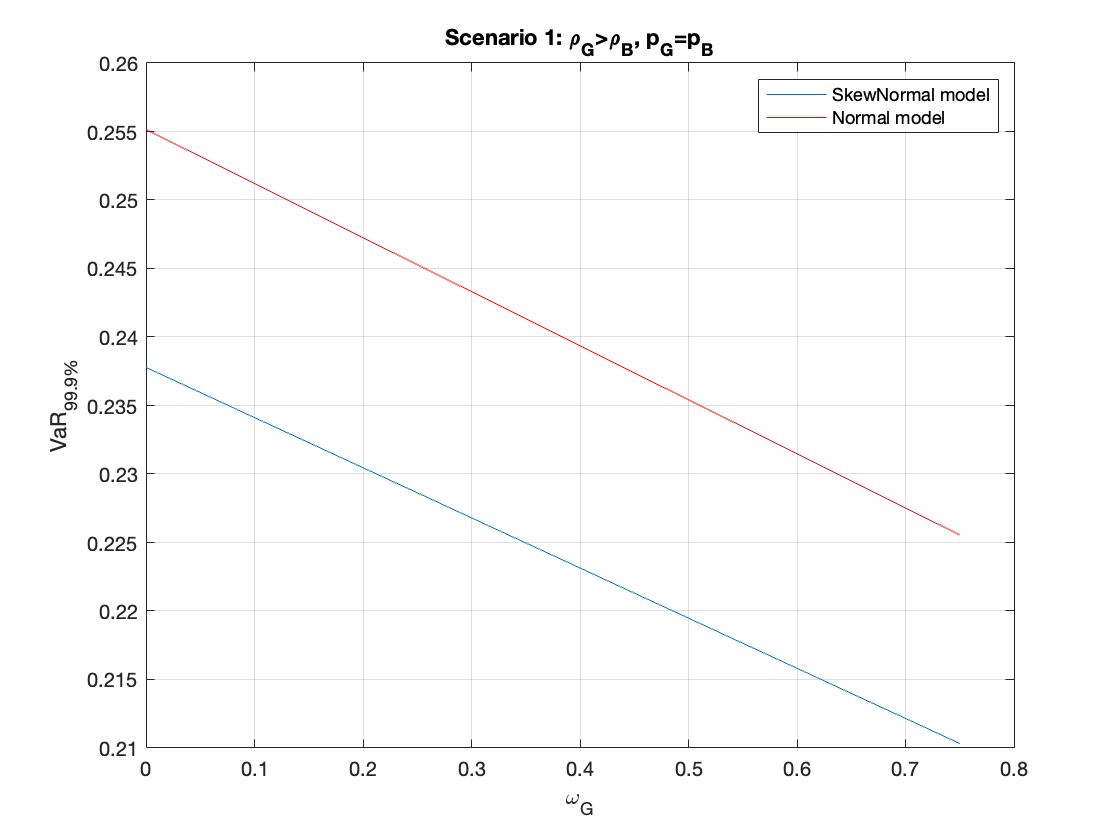}
\includegraphics[width=6.5cm,height=5.5cm]{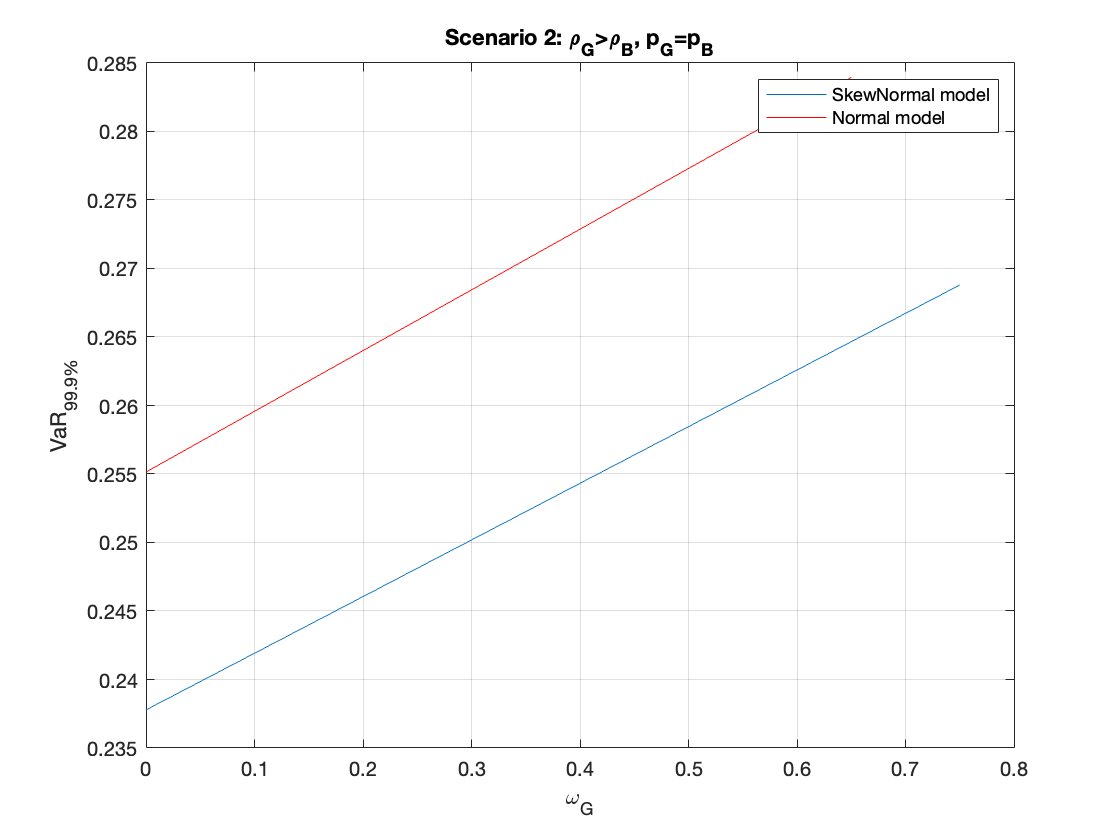}
\includegraphics[width=6.5cm,height=5.5cm]{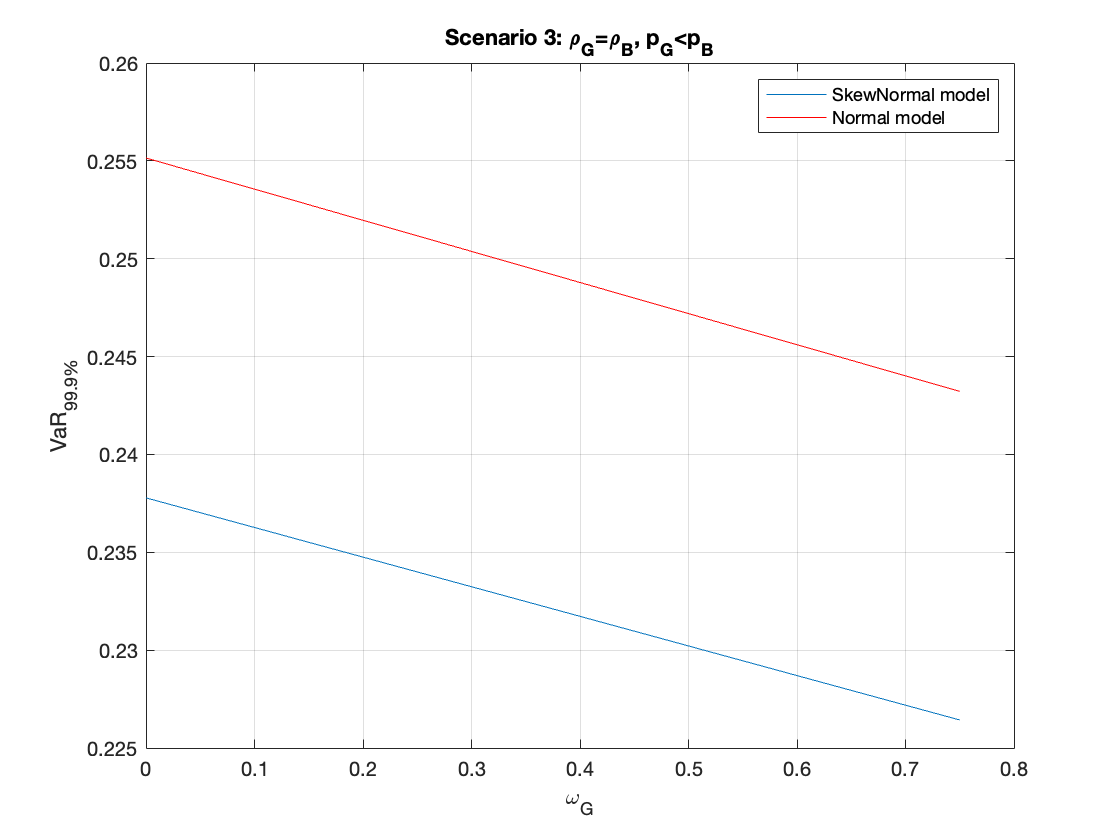}
\includegraphics[width=6.5cm,height=5.5cm]{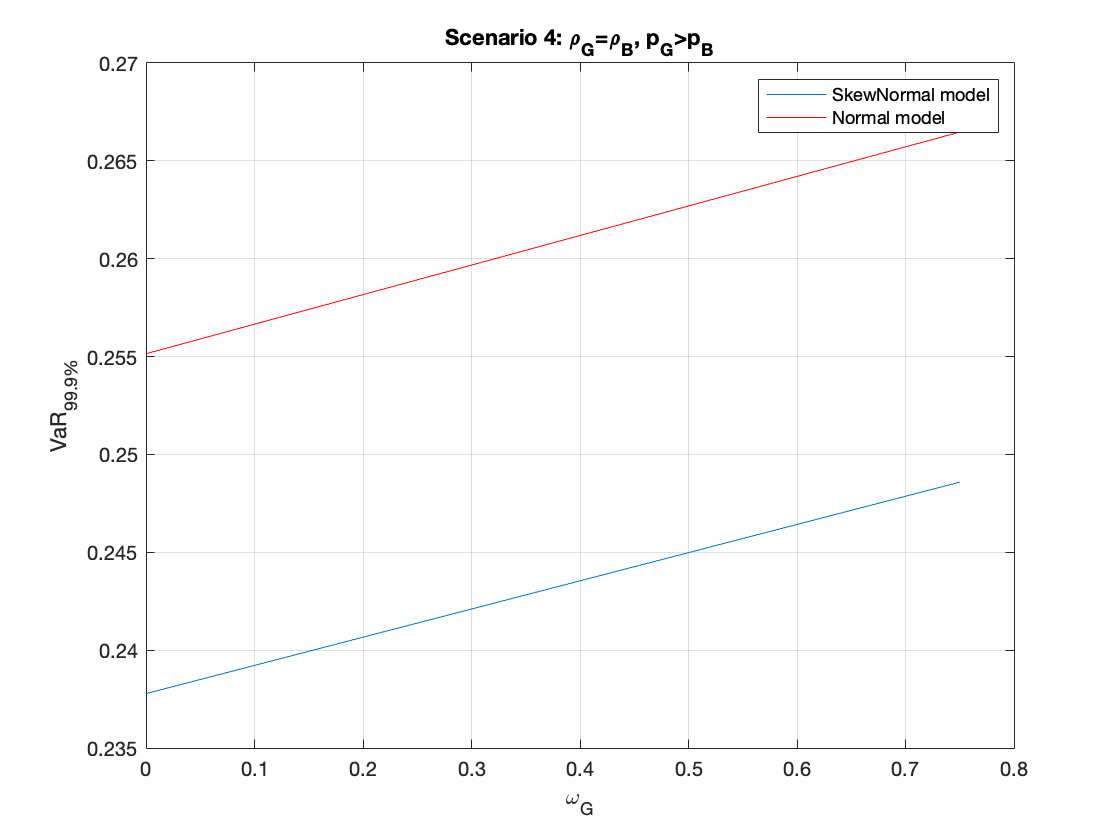}
\end{center}
\caption{VaR at level $\beta = 99.9\%$ for the green-brown portfolio as a function of the relative weight $\omega_G$, for different scenarios. The red line corresponds to the Gaussian model ($\alpha = 0$).}
\label{fig:VaRs}
\end{figure}

\section{Conclusions}

In this paper, we proposed an extension of the classical ASRF model framework to account for heterogeneous credit portfolios composed of green and brown loans. By adopting a two-factor copula structure and allowing for skew-normal specifications of the systematic risk components, our model captures asymmetric dependence patterns and differential risk sensitivities across the two sub-portfolios. We further incorporated non-uniform exposure distributions to reflect realistic portfolio compositions and derived a quadratic mean convergence result under general conditions, with the limiting loss variable reflecting the structural distinctions between the green and brown components.

Our numerical investigations confirmed the theoretical convergence and highlighted the role of key model parameters in shaping the loss distribution. In particular, we demonstrated how exposure granularity, skewness, default probabilities, and factor loadings influence the value-at-risk, with marked effects under varying portfolio configurations. These results underscore the importance of explicitly modeling green and brown credit segments when evaluating portfolio-level credit risk, especially in the context of sustainability-oriented lending.

Future research may extend the current framework in several directions. One promising avenue involves adapting the granularity adjustment technique to our setting in order to refine the LHP approximation of value-at-risk and better quantify idiosyncratic risk contributions in finite portfolios. Another line of work concerns the extension of securitization analysis and CDO pricing methodologies (see \cite{agl21}) to accommodate the skew-normal specification of systematic risk factors, thus improving the valuation in asymmetric credit environments. Finally, empirical investigations based on borrower-level data could offer valuable insight into the structural differences between the green and brown loan segments. In particular, such analyses could address whether green loan portfolios exhibit greater resilience to systematic shocks, with implications for both portfolio management and sustainable finance policy.

\section{Acknowledgments}
The Authors have been supported by the PRIN 2022 PNRR Project "The resilience of sustainable finance"(Project Number: P2022ENNYP—CUP: B53D23026490001) funded by the European Union--Next Generation EU.


\section*{Appendix 1}
\begin{Proposition} (proof of formula (\ref{skew-gaussian}))\\
We have
\begin{equation}\label{skew-normal}
{\bf{Y}}^{(N)}\sim SN({\bf{0}},{\bf{b}}{\bf{b}}'+{\bf{D}}{\bf{\Sigma}}, \frac{{\bf{\Sigma}}^{-1}{\bf{b}}}{\sqrt{1-{\bf{b}}'{\bf{\Sigma}}^{-1}{\bf{b}}}}),
\end{equation}
with ${\bf{b}}=(\beta_b{\bf{1}}^b,\beta_g{\bf{1}}^g)'$, where $\beta_a\equiv \delta_a\rho_a$, ${\bf{D}}=diag(\sqrt{1-\beta^2_b}{\bf{1}}^b,\sqrt{1-\beta^2_g}{\bf{1}}^g)$, and ${\bf{\Sigma}}$ is a partitioned covariance matrix specified below.
\end{Proposition}
\begin{proof} 
Recall that for $h=1,\ldots,N_a$ we have
$$
Y^{(N_a)}_h=\rho_a\sqrt{1-\delta_a^2}\;X_1 +\rho_a\delta_a\; X_2+\sqrt{1-\rho_a^2}\;Z^{(N_a)}_{h}
=\beta_a X_2+\sqrt{1-\beta_a^2}V^{(N_a)}_h,
$$
where $\beta_a\equiv \rho_a\delta_a$ and we have set
$$
V^{(N_a)}_h\equiv \sqrt{\frac{\rho_a^2-\beta_a^2}{1-\beta_a^2}}X_1+\sqrt{\frac{1-\rho_a^2}{1-\beta_a^2}}Z^{(N_a)}_h.
$$
The random vector ${\bf{V}}^{(N)}=({\bf{V}}^{(N_b)},{\bf{V}}^{(N_g)})$ is gaussian distributed with $N(0,1)$ marginal laws and $N\times N$ covariance matrix ${\bf{\Sigma}}$ partitioned in the following way
$$
{\bf{\Sigma}}= \left[ {\begin{array}{cc}
   {\bf{\Sigma}}^{bb} & {\bf{\Sigma}}^{bg} \\
   {\bf{\Sigma}}^{gb} & {\bf{\Sigma}}^{gg} \\
  \end{array} } \right],
$$
where ${\bf{\Sigma}}^{bb}$ is ${N_b\times N_b}$, ${\bf{\Sigma}}^{gg}$ is ${N_g\times N_g}$, $ {\bf{\Sigma}}^{bg}$ is ${N_b\times N_g}$ and ${\bf{\Sigma}}^{gb}=({{\bf{\Sigma}}^{bg}})'$. The coefficients of the previous matrices are as follows: $\Sigma_{ij}^{bb}=\left(\frac{\rho_b^2-\beta_b^2}{1-\beta_b^2}\right)$ for $i\neq j$ and 
$\Sigma_{ii}^{bb}=1$, $\Sigma_{ij}^{gg}=\left(\frac{\rho_g^2-\beta_g^2}{1-\beta_g^2}\right)$ for $i\neq j$ and 
$\Sigma_{ii}^{gg}=1$, finally $\Sigma_{ij}^{bg}=\left(\frac{\rho_b^2-\beta_b^2}{1-\beta_b^2}\right)^{1/2}\left(\frac{\rho_g^2-\beta_g^2}{1-\beta_g^2}\right)^{1/2}$.
Clearly we have
\begin{equation}\label{conv}
{\bf{Y}}^{(N)}=({\bf{Y}}^{(N_b)},{\bf{Y}}^{(N_a)})={\bf{b}}X_2+{\bf{D}}{\bf{V}}^{(N)}
\end{equation}
for ${\bf{b}}=(\beta_b{\bf{1}}^b,\beta_g{\bf{1}}^g)'$ and ${\bf{D}}=diag(\sqrt{1-\beta^2_b}{\bf{1}}^b,\sqrt{1-\beta^2_g}{\bf{1}}^g)$. The representation of ${\bf{Y}}^{(N)}$ as sum of two random vectors allows to determine its distribution by convolution which,  by routine calculations, leads to (\ref{skew-gaussian}), see e.g.  Azzalini (2005), pg.173-174.
\end{proof}


\begin{thebibliography}{00}

\bibitem{agl21}  Agliardi, R. (2022). Green securitisation. Journal of Sustainable Finance \& Investment, 12(4), 1330-1345.


\bibitem{azz1} Azzalini A. (2005),  The multivariate skew-normal distribution and related multivariate families (with discussion). \emph{Scandinavian Journal of Statistics} 32,159-188.

\bibitem{battiston2017}
Battiston, S., Mandel, A., Monasterolo, I., Sch\"utze, F., \& Visentin, G. (2017). A climate stress-test of the financial system. \textit{Nature Climate Change}, \textbf{7}, 283--288.

\bibitem{clu23} Casellina, S., Landini, S., \& Uberti, M. (2023). Credit risk measures and the estimation error in the ASRF model under the Basel II IRB approach. Communications in Nonlinear Science and Numerical Simulation, 118, 106977.

\bibitem{EBA2024} European Banking Autority (2024), Report on the 2023 Credit Risk Benchmarking Exercise - 
Results on the analysis of the variability of own
funds requirements based on the IRB approach.
EBA/REP/2024/05. 

\bibitem{EBACasel2023} Casellina, S., Salis, F., Tessiore, G., Ugoccioni, R., \& Franco Varetto, F. (2023). The calibration of the IRB supervisory formula – a case study.
EBA Staff Paper Series, N. 17 – 09/2023.


\bibitem{EmmTasc}Emmer, S, D. Tasche (2005), Calculating credit risk capital charges with the one-factor model,\emph{Journal of Risk},7, 85-101.

\bibitem{FC2023} Fanizza, D., Cerami, L. (2023). A Market for Brown Assets To Make Finance Green, IMF Working Papers 2023/011, International Monetary Fund.

\bibitem{frey2003}
Frey, R., \& McNeil, A. J. (2003). Dependent defaults in models of portfolio credit risk. \textit{Journal of Risk}, \textbf{6}(1), 59--92.

\bibitem{gordy} Gordy M., (2003), A Risk-Factor Model Foundation for Ratings-Based Bank Capital Rules. \emph{Journal of Financial Intermediation} 12, 199-232

\bibitem{genton} Genton M. G. (ed.), (2004), \emph{Skew-Elliptical distributions and their applications: a journey beyond normality},  Chapman $\&$ Hall/CRC, Boca Raton.

\bibitem{hkww22} Huang, H. H., Kerstein, J., Wang, C., \& Wu, F. (2022). Firm climate risk, risk management, and bank loan financing. Strategic Management Journal, 43(13), 2849-2880.

\bibitem{jm21} Javadi, S., \& Masum, A. A. (2021). The impact of climate change on the cost of bank loans. Journal of Corporate Finance, 69, 102019.

\bibitem{ly19} Lee, Y., \& Yang, K. (2019). Modeling diversification and spillovers of loan portfolios' losses by LHP approximation and copula. International Review of Financial Analysis, 66, 101374.

\bibitem{lp15} Yongwoong Lee, Ser-Huang Poon (2015), Loan Portfolio Loss Models With More Flexible Asymmetry and Tails for Korean Banks and a Comparison of Their Regional Concentrations, \emph{Emerging Markets Finance \& Trade}, 51:S118–S139,DOI: 10.1080/1540496X.2015.1039864.

\bibitem{Lutke08} L\"utkebohmert, E. (2008). Concentration risk in credit portfolios. Springer Science \& Business Media.

\bibitem{MartWild} Martin R., Wilde T. (2002) Unsystematic Credit Risk. \emph{RISK} 15 (12).

\bibitem{mert}  Merton R. (1974) On the Pricing of Corporate Debt: The Risk Structure of Interest Rates. \emph{Journal of Finance}, 29,449-470.

\bibitem{pykhtin2004} Pykhtin, M. (2004). Multi-factor adjustment. \textit{Risk}, \textbf{17}(8), 85--90.



\bibitem{so05} Schloegl L., O'Kane D., (2005). A note on the large homogeneous portfolio approximation with the Student-t copula, \emph{Finance and Stochastic}, 9, 577-584.

\bibitem{TTY2021} Tang, Q., Tong, Z., \& Yang, Y. (2021). Large portfolio losses in a turbulent market. European Journal of Operational Research, 292(2), 755-769.

\bibitem{vdV} van der Vaart A. W. (2000). Asymptotic Statistics. Cambridge University Press.

\bibitem{vas1}  Vasicek O., (1987). Probability of loss on loan portfolio, Working paper,KMV Corporation.

\bibitem{vas2}  Vasicek O., (2002). Loan portfolio value. \emph{Risk} {\bf{15}}(12),160-162.
\end{thebibliography}
\end{document}